\subjclass[2020]{Primary: 14Q15, 15A69, secondary: 14N07}
\keywords{secant varieties, tensor decomposition, Chow decomposition, symmetric tensors}
\newtheoremstyle{alstandard}{7pt}{3pt}{\rm}{}{\scshape}{:}{0.5em}{}
\theoremstyle{alstandard}
\declaretheorem[name=Theorem]{theorem}
\numberwithin{theorem}{section}
\declaretheorem[sibling=theorem, name=Lemma]{lemma}
\declaretheorem[sibling=theorem, name=Proposition]{prop}
\declaretheorem[sibling=theorem, name=Remark]{remark}
\declaretheorem[sibling=theorem, name=Claim]{claim}
\declaretheorem[sibling=theorem, name=Question]{question}
\newcommand{\R}{\mathbb{R}}
\newcommand{\C}{\mathbb{C}}
\newcommand{\K}{\mathbb K}
\DeclareMathOperator{\im}{im}
\DeclareMathOperator{\rk}{rank}
\DeclareMathOperator{\GL}{GL}
\newcommand{\comment}[1]{}
\begin{document}
	\title[Chow pencils]{A linear-time algorithm for Chow decomposition
	} 
		
	\author[Blomenhofer]{Alexander Taveira Blomenhofer}
	\address{University of Copenhagen, Lyngbyvej 2, 2100 Copenhagen \o, Denmark}
	\email{atb@math.ku.dk}
	
	\author[Lovitz]{Benjamin Lovitz}
	\address{Concordia University, 1455 Blvd. De Maisonneuve Ouest, Montreal, Quebec, Canada}
	\email{benjamin.lovitz@concordia.ca, Benjamin.Lovitz@gmail.com}

	\begin{abstract} 
		We propose a linear-time algorithm to compute low-rank Chow decompositions. Our algorithm can decompose concise symmetric $ 3 $-tensors in $n$ variables of Chow rank $n/3$. The algorithm is pencil-based, hence it relies on generalized eigenvalue computations. We also develop sub-quadratic time algorithms for higher order Chow decompositions, and Chow decompositions of $ 3 $-tensors into products of linear forms which do not lie on the generic orbit. In particular, we obtain a sub-quadratic-time algorithm for decomposing a symmetric 3-tensor into a linear combination of W-tensors.
	\end{abstract}

	\maketitle
	\raggedbottom
	\section{Introduction}
Let $T \in (\K^n)^{\otimes 3}$ be a 3-tensor over $\K\in \{\R, \C\}$, and let $ \mathcal{L}_T \subseteq (\K^n)^{\otimes 2}$ be the space of all matrices which arise from contraction of $ T $ by some vector $ f\in \mathbb \K^n $ along the first index. 
We define the \emph{contraction variety} $X_T$ to be the set of all vectors $ f\in \mathbb \K^n $ such that the contraction $ T_f $ of $ T $ along $ f $ has rank strictly smaller than a generic contraction. For instance, if general elements of $ \mathcal{L}_T $ have full rank, then we have
\begin{align*}
	X_T = \{f\in \mathbb \K^n\mid \det(T_f) = 0\}. 
\end{align*}
In the low-rank setting, the contraction variety often carries information that allows one to compute the minimum rank decomposition of a tensor and certify its uniqueness. For instance, if $ T\in \K^{n\times n\times n} $ is a concise symmetric tensor of symmetric rank at most $ n $, then its contraction variety is a union of $ n $ hyperplanes. Indeed, if $ T $ has a decomposition $ T = \sum_{i = 1}^{n} a_i^{\otimes 3} $, then by conciseness, $ a_1,\ldots,a_n $ form a basis of $ \K^n $. We will later see that then
\begin{align*}
	X_T = \bigcup_{i = 1}^n \langle a_i \rangle^{\perp}. 
\end{align*}
Here and throughout, we take the orthogonal complement with respect to the bilinear pairing $\langle a, b \rangle = a^T b$, which is an inner product for $ \K = \R $. Hence, the contractions $ T\mapsto T_f $ that make $ T_f $ singular are precisely those for which $ f $ is orthogonal to one of $ a_1,\ldots,a_n $. This fact can be turned into a decomposition algorithm via matrix pencils. For symmetric rank, it is classically known how to compute the minimum rank decomposition under the given assumptions. In fact, algorithms for concise tensors of symmetric rank $ n $ have been discovered and rediscovered so many times that it is difficult to give correct attribution to a single reference. See the discussion under \emph{related work}. 
 
Surprisingly, we find that a very similar approach also works for tensors of low Chow rank. 
Recall that the \textit{Chow rank} is the minimum decomposition of a symmetric tensor $T$ into terms of the form $a_{i1} a_{i2} a_{i3}$, where $$ a_{i1} a_{i2} a_{i3}:=\frac{1}{3!} \sum_{\sigma \in \mathfrak{S}_3} a_{i\sigma(1)}\otimes a_{i \sigma(2)}\otimes a_{i \sigma(3)} $$ denotes the symmetrization of the tensor product $ a_{i1} \otimes a_{i2} \otimes a_{i3} $. If we view $T$ as a homogeneous polynomial, then this decomposes $T$ into a sum of products of linear forms.

We will see that the contraction variety of a concise symmetric order $ 3 $ tensor of Chow rank $ n/3 $ is also a union of hyperplanes. Precisely, if the rank-1 factors are given by the vectors $ \{a_{i1}a_{i2}a_{i3}\}_{i=1,\ldots,n/3} $ as in \eqref{eq:chow-decomposition-order-3}, then $\{a_{ij}\}_{i=1,\dots, n/3, j=1,2,3}$ forms a basis for $\K^n$, and we have
\begin{align*}
	X_T = \bigcup_{i=1}^r \langle a_{i1} \rangle^{\perp} \cup \langle a_{i2} \rangle^{\perp} \cup \langle a_{i3} \rangle^{\perp}.
\end{align*}
This fact will allow us to give a linear-time, pencil-based algorithm to compute Chow decompositions.
\subsubsection{Organization and overview of results}
Our main result is the following:
\begin{theorem}
Let $n=3r$, and let $ T\in \K^{n\times n\times n} $ be a concise symmetric 3-tensor of Chow rank $r$. Then the Chow rank decomposition of $T$ is unique, and it can be recovered in time $\mathcal{O}(n^3)$.
\end{theorem}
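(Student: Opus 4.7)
The plan is to adapt the classical pencil-based algorithm for concise symmetric rank-$n$ decompositions to the Chow setting. Fix a decomposition $T = \sum_{i=1}^{r} a_{i1}a_{i2}a_{i3}$; conciseness forces $\{a_{ij}\}$ to be a basis of $\K^n$, so let $\{b_{ij}\}$ be its dual basis and set $W_i := \vspan(b_{i1},b_{i2},b_{i3})$, giving $\K^n = W_1 \oplus \cdots \oplus W_r$. A direct symmetrization computation shows that in the basis pair $(\{b_{ij}\},\{a_{ij}\})$ every contraction $T_f$ becomes block-diagonal with $3\times 3$ blocks
\begin{align*}
\frac{1}{6}\begin{pmatrix}0 & \beta_{i3} & \beta_{i2}\\ \beta_{i3} & 0 & \beta_{i1}\\ \beta_{i2} & \beta_{i1} & 0\end{pmatrix},\qquad \beta_{ij}=\langle a_{ij},f\rangle,
\end{align*}
whose determinants are proportional to $\beta_{i1}\beta_{i2}\beta_{i3}$. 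Setting $\gamma_{ij}=\langle a_{ij},g\rangle$, it follows that $\det(T_f-\lambda T_g)$ factors as $\prod_{ij}(\beta_{ij}-\lambda\gamma_{ij})$ up to a constant, and for generic $f,g$ the pencil $(T_f,T_g)$ has $n$ simple generalized eigenvalues $\lambda_{ij}=\beta_{ij}/\gamma_{ij}$ with eigenvectors $v_{ij}\in W_i$. I would first compute this eigendecomposition in $O(n^3)$.

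Next I would recover the partition of the $n$ eigenvectors into triples using the identity $(V^T T_f V)_{a,b}=T(f,v_a,v_b)$, which vanishes as soon as $v_a$ and $v_b$ lie in different $W_i$'s: the $k$-th rank-$1$ Chow summand of $T$ contributes zero to $T(\cdot,v_a,v_b)$ unless both $v_a$ and $v_b$ pair nontrivially with some $a_{k\ell}$, and $\langle a_{k\ell},v\rangle=0$ for every $v\in W_i$ with $k\neq i$. Computing $V^T T_f V$ in $O(n^3)$ therefore exposes the block-diagonal sparsity pattern directly, and a single connected-components pass extracts the partition in $O(n^2)$. The same orthogonality gives $T|_{W_i\times W_i\times W_i} = (a_{i1}a_{i2}a_{i3})|_{W_i\times W_i\times W_i}$, a rank-$1$ Chow tensor on the three-dimensional space $W_i$.

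The subtle point is to extract this $3\times 3\times 3$ sub-tensor without paying $O(n^4)$ for a full tensor change of basis, and this is the main obstacle. I handle it by taking one additional generic contraction $T_h$ and forming $V^T T_h V$, again in $O(n^3)$. The three diagonal blocks of $V^T T_f V$, $V^T T_g V$, $V^T T_h V$ at position $i$ are three symmetric bilinear forms on $W_i$; they coincide with the polarizations of the directional derivatives of the cubic $q_i:=T|_{W_i}$ in three directions that, for generic $f,g,h$, span $W_i$. Thus $\nabla q_i$, and hence $q_i$ itself, is recovered by a fixed-size linear solve, and the rank-$1$ Chow factorization $q_i=L_{i1}L_{i2}L_{i3}$ into three linear forms is a classical $O(1)$ operation (the three factors correspond, for instance, to the three singular points of the projective cubic curve $\{q_i=0\}$). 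Each linear form $L_{ij}$ finally lifts to $a_{ij}\in\K^n$ via $V^{-T}$ applied to the vector supported on the three block-$i$ coordinates with entries $L_{ij}(v_{i\ell})$; precomputing $V^{-T}$ costs $O(n^3)$ and each lift is then $O(n)$.

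Uniqueness follows from three facts. The rays $\langle a_{ij}\rangle$ are determined by the hyperplane decomposition of $X_T$ recalled in the introduction; the partition into triples is the intrinsic block structure shared by every generic pencil in $\mathcal{L}_T$; and the Chow decomposition of a cubic triangle in three variables is classically unique up to permutation and scalars. Summing the costs---three contractions, one pencil eigendecomposition, three matrix conjugations, one matrix inversion, $r$ constant-size block computations, and $n$ lifts of $O(n)$ each---yields $O(n^3)$ overall.
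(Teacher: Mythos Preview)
Your overall plan---pencil, block structure, partition, then local $3\times 3\times 3$ recovery---is in the same spirit as the paper, but the partition step as written does not work. The matrix $V^{T}T_{f}V$ is not block-diagonal with full $3\times 3$ blocks: it is \emph{diagonal}. Since $T_{f}$ and $T_{g}$ are symmetric, eigenvectors of the pencil $(T_{f},T_{g})$ to distinct eigenvalues are automatically $T_{f}$-orthogonal (and $T_{g}$-orthogonal): from $T_{f}v_{a}=\lambda_{a}T_{g}v_{a}$ and symmetry one gets $(\lambda_{a}-\lambda_{b})\,v_{a}^{T}T_{g}v_{b}=0$, hence $v_{a}^{T}T_{f}v_{b}=v_{a}^{T}T_{g}v_{b}=0$ for $a\ne b$. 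So the sparsity pattern of $V^{T}T_{f}V$ carries no information about which eigenvectors share an index $i$, and your connected-components pass returns singletons. The paper handles exactly this issue by introducing a \emph{third} generic vector $\ell$, independent of $f$ and $g$, and proving (Proposition~2.3) that $x_{ik}^{T}T_{\ell}x_{i'k'}\ne 0$ iff $i=i'$; your vector $h$ could play that role, but you then cannot also reuse $V^{T}T_{f}V$ and $V^{T}T_{g}V$ as two of the three ``slices'' for reconstructing $q_{i}$, since their $i$-blocks are diagonal.

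There is a second, smaller gap in the recovery step. You write that the three $i$-blocks are directional derivatives of $q_{i}$ ``in three directions that span $W_{i}$'' and conclude that $\nabla q_{i}$ is recovered by a linear solve. But to solve that linear system you need to know the directions, and those are $\pi_{i}(f),\pi_{i}(g),\pi_{i}(h)$ for the projection $\pi_{i}\colon\K^{n}\to W_{i}$ along $\bigoplus_{k\ne i}W_{k}$. You never say how to compute them. They are in fact the block-$i$ entries of $V^{-1}f$, $V^{-1}g$, $V^{-1}h$, so the fix is cheap once $V^{-1}$ is available, but it should be stated. (Without the directions you only get $q_{i}$ up to a scalar, and then you still owe a step solving for the $r$ scalars.)

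For comparison, the paper avoids reconstructing $q_{i}$ altogether: once the triples are found via $T_{\ell}$, it sets $U_{i}=\langle T_{f}x_{i1},T_{f}x_{i2},T_{f}x_{i3}\rangle$ and recovers each $a_{ik}$ (up to scale) directly as a generator of the one-dimensional intersection $\langle x_{ik}\rangle^{\perp}\cap\langle h_{ik}\rangle^{\perp}\cap U_{i}$, where $h_{ik}=f-\lambda_{ik}g$. This exploits the geometric meaning of the pencil data (namely $h_{ik}\perp a_{ik}$ and $x_{ik}\perp a_{ik}$) rather than rebuilding the local cubic, and it sidesteps both of the issues above.
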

Note that the runtime is linear in the size $n^3$ of the tensor. This result is presented in \Cref{sec:core}. The rank assumptions, together with conciseness, imply that each Chow rank 1 term lies on the generic orbit of the Chow variety, which is $ \GL_n e_1e_2e_3 $. In \Cref{sec:cubic-nonindependent}, we examine what can be said if one allows non-generic orbits, i.e., if the Chow rank 1 components may be chosen from any of the $ 4 $ orbits in the Chow variety (namely, the orbits of $e_1^3, e_1 e_2 e_3, e_1^2 e_2,$ and $e_1 e_2 (e_1+e_2)$). In particular, we obtain the following:
\begin{theorem}\label{thm:nongeneric-orbits-intro}
Let $T=\sum_{i=1}^r a_{i1}a_{i2}a_{i3}$ be a symmetric tensor for which the spaces $U_i=\langle a_{i1},a_{i2},a_{i3}\rangle$ are in direct sum. Then a minimum Chow rank decomposition of $T$ can be obtained in  time $\mathcal{O}(n^4)$.
\end{theorem}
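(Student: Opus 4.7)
The plan is to reduce to Theorem 1.1 (or to constant-size analogues on each block) by first identifying the direct-sum structure $U := \bigoplus_{i=1}^r U_i$ of the support of $T$, and then decomposing each block $U_i$ independently in bounded time. The key observation driving the reduction is that since every summand $a_{i1}a_{i2}a_{i3}$ lies in $\Sym^3(U_i)$, the full tensor satisfies $T \in \bigoplus_i \Sym^3(U_i) \subseteq \Sym^3(U)$, and every contraction $T_f$ lies correspondingly in $\bigoplus_i \Sym^2(U_i) \subseteq \Sym^2(U)$. In a basis of $U$ adapted to the direct sum, each $T_f$ is therefore block-diagonal, and the $U_i$'s are characterized as the minimal common block-diagonalizing subspaces of the family $\{T_f\}_{f\in U^*}$. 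Note that $T$ need not be concise in this setting, but $U$ itself is recoverable in $O(n^3)$ as the image of the flattening $f\mapsto T_f$.

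Next, I would identify the $U_i$'s by a pencil argument analogous to the one in \Cref{sec:core}. Forming the pencil $(T_f|_U, T_g|_U)$ for generic $f,g$ and computing its generalized eigenstructure in $O(n^3)$ yields a basis of $U$ whose vectors each lie in some $U_i$; the remaining task is to partition these basis vectors according to block. Since the symmetric pencil form cannot by itself distinguish "same $U_i$" from "different $U_j$" (in both cases eigenvectors come out $T_g$-orthogonal), I would invoke the trilinear structure of $T$ itself: a candidate subset $W\subseteq U$ of eigenvectors is a correct block precisely when the projection of $T$ onto $\Sym^3(W)$ is again a single Chow rank-$1$ tensor lying on one of the four orbits, a condition verifiable in bounded size. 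Iteratively extracting one block at a time and deflating gives at most $O(n)$ rounds, each costing $O(n^3)$, producing the claimed $O(n^4)$ bound.

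Once the $U_i$'s are identified, the projection of $T$ onto each $\Sym^3(U_i)$ equals the single summand $a_{i1}a_{i2}a_{i3}$. Since $\dim U_i \leq 3$, classifying its $\GL(U_i)$-orbit among $\{e_1^3,\, e_1^2e_2,\, e_1e_2(e_1+e_2),\, e_1e_2e_3\}$ and extracting the factors $a_{i1},a_{i2},a_{i3}$ is $O(1)$ per block. Uniqueness of the global minimum Chow rank decomposition follows from uniqueness of the rank-$1$ factorization within each fixed orbit combined with the direct-sum assumption, which prevents cross-block reshuffling.

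The step I expect to be the main obstacle is the regrouping of pencil eigenvectors into blocks. The clean linear-time pencil argument of the core algorithm exploits the fact that, on the generic orbit $\GL_n\cdot e_1e_2e_3$, eigenvectors of the pencil are the rank-$1$ factors $a_{ij}$ themselves. In the non-generic regime, a $U_i$ of dimension $2$ or $3$ contributes an entire invariant subspace rather than a single distinguished eigenvector, and because the bilinear pairing does not respect the direct-sum decomposition, one cannot simply read off the $U_i$'s from orthogonality relations. Handling this requires the higher-order check against $T$ described above, and verifying that generic $f,g$ make the check decisive (including that the contractions $T_f|_{U_i}$ have the expected rank on every orbit, e.g.\ rank $3$ for $e_1e_2e_3$) is where the real work lies.
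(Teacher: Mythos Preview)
Your reduction strategy has a genuine gap at the pencil step. You claim that the generalized eigenstructure of $(T_f|_U, T_g|_U)$ ``yields a basis of $U$ whose vectors each lie in some $U_i$''. This fails precisely for the W-tensor orbit $\GL_n\cdot e_1^2 e_2$: writing $M_i = e_1^2 e_2$, the restricted contraction is the $2\times 2$ matrix $\frac{1}{3}\bigl(\begin{smallmatrix}\langle f, e_2\rangle & \langle f, e_1\rangle\\ \langle f, e_1\rangle & 0\end{smallmatrix}\bigr)$, and the pencil $(M_{if}|_{U_i},M_{ig}|_{U_i})$ has a \emph{single} eigenvalue of algebraic multiplicity $2$ but geometric multiplicity $1$. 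Hence the eigenvectors do not span $U_i$, and the block cannot be read off from the eigenstructure alone. The paper confronts exactly this obstruction: it detects such indices via the criterion $x_i^T T_\ell x_i = 0$, and then recovers the missing direction of $U_i$ by a separate construction---first $a_{i1}' = T_f x_i$, then a dual vector $b_i$ orthogonal to the other $a_{j1}'$, then $v_i = ((b_i^T)^{\otimes 2}\otimes I)T$---and it is this step, costing $\mathcal{O}(n^3)$ per index, that produces the $\mathcal{O}(n^4)$ bound.

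Two further issues. Your partitioning criterion (project $T$ onto $\Sym^3(W)$ and test for Chow rank $1$) is not algorithmic as stated: you give no mechanism for selecting candidate subsets $W$ in polynomial time, and the natural pairwise test $y^T T_\ell x$ does not separate blocks in your framework, since vectors in distinct $U_i, U_j$ need not be orthogonal (the direct sum is not assumed orthogonal). The paper avoids this by working with eigenvectors of the pencil on the full space, which lie in $\bigcap_{j\neq i} U_j^\perp$ rather than in $U_i$; for these the trilinear contractions $x^T T_\ell y$ do cleanly isolate the type (II) triples and detect type (III). Finally, your uniqueness claim is false over $\K=\C$: since $e_1^3 + e_2^3$ has Chow rank $1$, pairs of type (I) terms can be regrouped into type (IV) terms, and the decomposition is unique only up to this choice.
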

The Chow rank decomposition is unique over $\K=\R$, and has a controlled non-uniqueness over $\K=\C$ resulting from the fact that $e_1^3+e_2^3$ has Chow rank 1. See~\Cref{sec:cubic-nonindependent} for more details. In particular, \Cref{thm:nongeneric-orbits-intro} gives an $\mathcal{O}(n^4)$-time algorithm to decompose a symmetric tensor into W-tensors. These are Chow rank 1 tensors of the form $a_{i1}^2 a_{i2}$.

In \Cref{sec:higher-order-chow}, we discuss a generalization to symmetric tensors $ T $ of odd order $ 2d+1 $, that is, $ T\in S^{2d+1}(\mathbb C^n) $. We show that if $ T $ is general of Chow rank $ r= \mathcal{O}(n^d) $, then there is an efficient algorithm to compute the unique minimum Chow rank decomposition. Note that pencil-based methods admit a similar odd-order generalization for symmetric rank, which also allows to go up to rank $ \mathcal{O}(n^d) $. The $ \mathcal{O} $-notation hides constants depending on $ d $. We leave open the question of generalizing our construction to even order.

While our main focus is on Chow decompositions, we expect our methodology to be of independent interest. Our approach can in principle be extended to decomposition problems for which the contraction variety is an arrangement of planes of constant codimension. 

\subsubsection{Model of computation} Here and throughout, we work in the real (or complex) model of computation, and assume access to an oracle to diagonalize a $k \times k$ diagonalizable matrix in time $\mathcal{O}(k^3)$. This is a standard model of computation for many tensor decomposition algorithms already present in the literature, e.g.~\cite{johnston2023computing,kothari2024overcomplete,koiran2025efficient}.

\subsubsection{Related work} 
As mentioned above, methods for low-rank tensor decomposition have been studied from an ample variety of angles. For symmetric tensor rank, algorithms exist based on either simultaneous diagonalization, tensor power iteration \cite{Anandkumar_Ge_Hsu_Kakade_Telgarsky_2012}, semidefinite programming \cite{Ma_Shi_Steurer_2016}, the eigenvalue method \cite{Telen_Vannieuwenhoven_2022} and, most importantly for this work, pencils \cite{Leurgans_Ross_Abel_1993}. The earliest algorithmic results of this flavour appear to be due to Sylvester \cite{Sylvester_1904} and Prony \cite{Prony_1795} in 1795. See \cite{Mourrain_2018} for a modern exposition of Prony's work. A similar algorithm based on simultaneous matrix diagonalization was introduced by Harshman~\cite{Har72} and later by Leurgans, Ross, and Abel~\cite{Leurgans_Ross_Abel_1993} (and then rediscovered several times thereafter). Matrix pencils have also been studied from the geometric viewpoint, e.g., via the variety of complete collineations \cite{Manivel_Michalek_etal_2024}, \cite{Gesmundo_Keneshlou_2025}.

At the same time, comparatively little is known for the Chow variety. They are overall one of the least understood standard models of tensor decomposition (see  \cite[p. 2]{Torrance_Vannieuwenhoven_2022}). The identifiability of Chow decompositions has only recently been addressed (\cite{Torrance_Vannieuwenhoven_2022}, \cite{Blomenhofer_Casarotti_2023}). Prior work of Kayal and Saha~\cite{kayal2019reconstruction} gave an $n^c$-time algorithm for Chow decomposition in the black-box model of computation for an uncontrolled constant $c$. By contrast, the exponents for our algorithms are tightly controlled, and we can handle similar rank constraints. For example, for order-3 tensors our algorithm can decompose tensors up to Chow rank $n/3$, while that of~\cite{kayal2019reconstruction} handles Chow rank $n/9$. More recently, \cite{garg2020learning} (see also \cite{chandra2023learning}) gave $\text{poly}(n)$-time algorithms for decomposing homogeneous polynomials into powers of low-degree polynomials in the black-box model.

\subsubsection{Applications} Chow decompostions of cubic forms correspond to depth-three $\Sigma \Pi \Sigma$ arithmetic circuits, and the Chow rank of a form determines the minimum fan-in of any such circuit that computes it. See e.g. \cite{karnin2009reconstruction} and the references therein.
Chow decompositions are also relevant in quantum information theory, as Chow rank-1 terms are symmetrizations of pure product states, and include many commonly used examples such as W-states and GHZ-states. In particular, our algorithm from \Cref{sec:cubic-nonindependent} can be used to decompose a given tripartite quantum state into a superposition of W-states.

		\section{Chow decompositions for third order tensors}\label{sec:core}

Let $\K \in \{\R, \C\}$. A Chow decomposition of a symmetric order-$ 3 $ tensor $ T\in S^3(\K^n) $ has the form
\begin{align}\label{eq:chow-decomposition-order-3}
	T = \sum_{i = 1}^{r} a_{i1} a_{i2} a_{i3}
\end{align}
where for vectors $v_1,\dots, v_d \in \K^d$ we let $v_1 \cdots v_d := \frac{1}{d!} \sum_{\sigma \in \mathfrak{S}_d} v_{\sigma(1)} \otimes \cdots \otimes v_{\sigma(d)} $. If $ r $ is minimal, then~\eqref{eq:chow-decomposition-order-3} is called a Chow-rank decomposition.

We consider the case where the set $ \{a_{i1}, a_{i2}, a_{i3} \}_{i = 1,\ldots,r} $ of all vectors occurring in the decomposition is linearly independent. Note that this holds whenever $T$ is a concise tensor of Chow rank $r=n/3$. In the following, we determine the contraction variety. 
\begin{prop}\label{prop:contraction-variety-chow}
	Let $ T $ be a concise tensor with a Chow decomposition as above and where $\{a_{i1}, a_{i2}, a_{i3} \}_{i = 1,\ldots,r} $ is linearly independent. In particular, $ n = 3r $. Then, 
	\begin{align*}
		X_T = \bigcup_{i=1}^r \langle a_{i1} \rangle^{\perp} \cup \langle a_{i2} \rangle^{\perp} \cup \langle a_{i3} \rangle^{\perp}.
	\end{align*}
\end{prop}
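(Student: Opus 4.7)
The plan is to express $T_f$ in a basis adapted to the direct sum decomposition $\K^n = U_1 \oplus \cdots \oplus U_r$, where $U_i = \langle a_{i1}, a_{i2}, a_{i3}\rangle$. The hypothesis that $\{a_{ij}\}_{i,j}$ is linearly independent (together with conciseness, which forces $n = 3r$) makes this decomposition available, and the key observation is that $T_f$ becomes block-diagonal with $3\times 3$ blocks whose singularity locus is easy to read off.

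First, I would contract a single Chow rank-$1$ term. Expanding the symmetric product directly yields
\begin{align*}
    (a_{i1} a_{i2} a_{i3})_f = \tfrac{1}{3}\bigl(\langle a_{i1}, f\rangle\, a_{i2} a_{i3} + \langle a_{i2}, f\rangle\, a_{i1} a_{i3} + \langle a_{i3}, f\rangle\, a_{i1} a_{i2}\bigr),
\end{align*}
so summing over $i$ shows that $T_f \in \bigoplus_{i=1}^r S^2(U_i) \subset S^2(\K^n)$. Because the $U_i$ are in direct sum, the matrix of $T_f$ in the basis $(a_{11}, a_{12}, a_{13}, a_{21}, \ldots, a_{r3})$ is block-diagonal with $r$ blocks of size $3 \times 3$. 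Writing $\lambda_{ij} := \langle a_{ij}, f\rangle$, the $i$-th block is proportional to the symmetric matrix
\begin{align*}
    M_i = \begin{pmatrix} 0 & \lambda_{i3} & \lambda_{i2} \\ \lambda_{i3} & 0 & \lambda_{i1} \\ \lambda_{i2} & \lambda_{i1} & 0 \end{pmatrix},
\end{align*}
whose determinant is $2 \lambda_{i1} \lambda_{i2} \lambda_{i3}$.

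Consequently $\rk(T_f) = \sum_{i=1}^r \rk(M_i)$. A generic $f$ has $\lambda_{ij} \neq 0$ for all $i,j$, so each $M_i$ is invertible and $T_f$ has full rank $n$. Hence $T_f$ drops rank below this generic value precisely when some block $M_i$ is singular, and by the determinant formula this happens if and only if $\lambda_{ij} = 0$ for some $(i,j)$, i.e., if and only if $f \in \langle a_{ij}\rangle^{\perp}$ for some pair $(i,j)$. This is exactly the claimed description of $X_T$.

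The only subtle point is the block-diagonal claim: one must verify that the symmetric products $a_{ij} a_{ik}$ appearing in the expansion contribute only to entries within the $i$-th block of the Gram matrix, which is immediate from the direct sum decomposition $\K^n = \bigoplus_i U_i$. Everything else reduces to the $3 \times 3$ determinant computation above, so I do not anticipate any genuine obstacle.
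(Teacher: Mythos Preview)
The proposal is correct and follows essentially the same approach as the paper: both arguments reduce $\rk(T_f)$ to a sum of ranks of $3\times 3$ blocks indexed by the $U_i$ (the paper via the direct-sum decomposition of images $\im T_f=\bigoplus_i \im M_{if}$, you via the coefficient matrix in the $a_{ij}$ basis, which amounts to the same change of coordinates), and then compute the same determinant $2\lambda_{i1}\lambda_{i2}\lambda_{i3}$ up to normalization. The only cosmetic difference is that the paper normalizes to $a_{ij}=e_j$ before writing the block, whereas you keep the $a_{ij}$ abstract; your justification of the block structure via $T_f\in\bigoplus_i S^2(U_i)$ is exactly what is needed.
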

\begin{proof}
	For $ f\in \C^n $, we may write $ T_f = \sum_{i = 1}^{r} M_{if} $, where $ M_{if} $ is the contraction of $ a_{i1} a_{i2} a_{i3} $ along $ f^T$. Note that the images of $ M_{if} $ are contained in $ U_i = \langle a_{i1},a_{i2},a_{i3} \rangle $ and the sum of the spaces $ U_i $ is direct. Therefore, it holds that $ \im T_f = \bigoplus_{i=1}^r \im M_{if} $. Consequentially, $ T_f $ is rank-deficient if and only if at least one of the $ M_{if} $ has rank less than $ 3 $. 
	
	(As an aside, note that for generic $ f $, all $ M_{if} $ have rank $ 3 $ and $ T_{f} $ has rank $ 3r $, which equals $ n $ by conciseness). 
	
	Let us find the set of $f$ for which $ M_{if} $ has rank less than $ 3 $. For this purpose, we can assume without loss of generality that $a_{ij}=e_j$ for $j=1,2,3$, and that $f=\alpha e_1 + \beta e_2 + \gamma e_3 + g$ for some $g \in \im(M_{if})^{\perp}$. Then $3 M_{if}=\alpha e_2 e_3 + \beta e_1 e_3 + \gamma e_1 e_2$, which has determinant $\frac{1}{4}\alpha\beta \gamma$. So $M_{if}$ is rank deficient if and only if $ f\in \langle a_{i1} \rangle^{\perp} \cup \langle a_{i2} \rangle^{\perp} \cup \langle a_{i3} \rangle^{\perp} $. This completes the proof.
\end{proof}

\Cref{prop:contraction-variety-chow} shows that the contraction variety of a concise tensor with a low-rank Chow decomposition is a union of codimension-1 subspaces. Therefore, we can proceed with a pencil-based algorithm and intersect $ X_T $ with a line, just like in the classical pencil-based algorithm for tensor rank. 
By \Cref{prop:contraction-variety-chow}, the generalized eigenvalue problem
\begin{align*}
	\det(T_f - \lambda T_g) = 0
\end{align*}
will have $ 3r $ solutions, which we label $ \lambda_{i 1},\lambda_{i 2},\lambda_{i 3} $, where $ i = 1,\ldots,r $. Let the labeling be chosen such that $ f-\lambda_{ik}g\in \langle a_{ik} \rangle^{\perp} $. As the solutions correspond to intersection points of the generic line $f-\lambda g$ 
with $ X_T $, the $ 3r$ solutions will be pairwise distinct. The corresponding eigenvectors we label accordingly as $ x_{i1}, x_{i2}, x_{i3} $. The eigenspaces are one-dimensional, as there are $ 3r $ pairwise distinct eigenvalues. 
Denote $ h_{ik} := f-\lambda_{ik}g \in \C^n$.

Note that $x_{i k} \perp U_j$ for $j \neq i$. Indeed, $M_{jh_{ik}} x_{i k}=0$, and $M_{jh_{ik}}$ is a symmetric matrix with $\im(M_{jh_{ik}})=U_j$.

\begin{remark}
	Even when $ \K=\R $, we always consider $ U_i $ and $ \im M_{if} $ as complex subspaces. While the matrices $ T_f $ and $ T_g $ are symmetric, they may not be positive definite (for instance, $ (e_1e_2e_3)_f $ has zeros on the diagonal). Therefore, the generalized eigenvalue problem can have complex solutions. In other words, for real $ f, g\in \R^n $, the intersection points of the line $ \lambda \mapsto f + \lambda g $ with $ X_T $ may be complex. 
	Therefore, we will always perform algorithmic operations over $ \K = \C $, unless explicitly stated otherwise. 
\end{remark}

In the following Proposition, we show a criterion to match eigenvectors together, which belong to the same Chow rank 1 term. 

\begin{prop}\label{prop:source-trick-for-dim-Ui-3}
Let $\ell \in \K^n$ be generic. Then $x_{i k}^T T_{\ell} x_{i' k'} \neq 0$ if and only if $i=i'$.
\end{prop}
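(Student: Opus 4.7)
My plan is to expand $T_\ell = \sum_{j=1}^r M_{j\ell}$, where (as in the proof of \Cref{prop:contraction-variety-chow}) $M_{j\ell}$ is the symmetric matrix obtained by contracting the summand $a_{j1} a_{j2} a_{j3}$ along $\ell$; in particular $\im M_{j\ell} \subseteq U_j$. Combined with the orthogonality $x_{ik} \perp U_j$ for $j \neq i$ recorded just before the statement, this expansion makes most terms vanish: the summand $x_{ik}^T M_{j\ell} x_{i'k'}$ is zero whenever $j \neq i$ (since then $M_{j\ell} x_{i'k'} \in U_j \perp x_{ik}$), and also whenever $j \neq i'$ (by symmetry of $M_{j\ell}$, since then $M_{j\ell} x_{ik} \in U_j \perp x_{i'k'}$).

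This immediately yields the ``only if'' direction: if $i \neq i'$, then every $j$ satisfies at least one of the two conditions, so $x_{ik}^T T_\ell x_{i'k'} = 0$ holds identically in $\ell$, not merely for generic $\ell$. For the ``if'' direction ($i = i'$), only the $j = i$ term can survive, so the remaining task is to show that the linear form $\ell \mapsto x_{ik}^T M_{i\ell} x_{ik'}$ on $\K^n$ is not identically zero. To do this I would reuse the coordinates of the proof of \Cref{prop:contraction-variety-chow}, in which $a_{ij} = e_j$ for $j = 1, 2, 3$; then $M_{i\ell}$ is supported on the same $3 \times 3$ ``off-diagonal'' block appearing there. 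The pencil equation $T_{h_{ik}} x_{ik} = 0$ together with the direct sum structure forces $M_{jh_{ik}} x_{ik} = 0$ for every $j$, so in particular the first three coordinates of $x_{ik}$ lie in the kernel of the $3 \times 3$ block of $M_{ih_{ik}}$; since $h_{ik,k} = 0$, this kernel is one-dimensional with an explicit generator whose $k$-th coordinate is zero and whose other two coordinates are (up to sign) those of $h_{ik}$.

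The main obstacle is then to confirm that, after substituting these explicit expressions, the resulting linear polynomial in $(\ell_1, \ell_2, \ell_3)$ has at least one nonzero coefficient. The verification splits into the case $k = k'$, where the expression reduces to a single term proportional to $\ell_k$ with coefficient a product of two coordinates of $h_{ik}$, and the case $k \neq k'$, where three coefficients appear, each a product of coordinates of $h_{ik}$ and $h_{ik'}$. In both cases, genericity of $f, g$ passes to genericity of $h_{ik} = f - \lambda_{ik} g$: only the forced coordinates vanish, so at least one coefficient of the linear form is nonzero. This will make the form nontrivial, hence nonzero outside a proper hyperplane in $\K^n$, completing the ``if'' direction.
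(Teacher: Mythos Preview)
Your proposal is correct and follows essentially the same route as the paper: reduce to the single summand $M_{i\ell}$ using $x_{ik}\perp U_j$ for $j\ne i$, then exploit the explicit one-dimensional kernel of the $3\times 3$ block of $M_{ih_{ik}}$ (which the paper records as \eqref{eq:containment_verified}). The only point to make explicit is that the first three coordinates of $x_{ik}$ form a \emph{nonzero} multiple of your kernel generator --- otherwise $x_{ik}\perp U_i$ as well, forcing $x_{ik}=0$; the paper handles the ``if'' direction by exhibiting a single convenient $\ell$ rather than computing all coefficients of the linear form, but this is a cosmetic difference.
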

\begin{proof}
If $i \neq i'$ then $x_{i k}^T T_{\ell} x_{i' k'} = 0$, since $x_{i k} \perp U_{i'}$ for $i' \neq i$. If $i=i'$, then it suffices to exhibit a single choice for which $x_{i k'}^T T_{\ell} x_{i k} \neq 0$, as this is an open condition on $\ell$. By symmetry, it suffices to prove that $x_{1 1}^T T_{\ell} x_{1 k} \neq 0$ for $k=1,2$. Choose $\ell$ so that $\langle \ell, a_{ik}\rangle = \delta_{i=1, k=1}$. Then
\begin{align*}
	x_{1 1}^T T_{\ell} x_{1 k}=x_{1 1}^T \sum_{i} M_{i\ell} x_{1k} = x_{1 1}^T M_{1 \ell} x_{1k}= \alpha_k \langle x_{11},a_{12} \rangle \langle x_{1k}, a_{13} \rangle,
\end{align*}
where $\alpha_1=\frac{1}{3}$ and $\alpha_2=\frac{1}{3!}$. Suppose toward contradiction that one of the inner products on the righthand side is zero, e.g. the first. Note that
\begin{align}\label{eq:M_h}
	M_{h_{11}}=\frac{1}{3!}(a_{11}v^T+ v a_{11}^T),
\end{align}
where $v=\langle h_{11}, a_{12}\rangle a_{13}+\langle h_{11}, a_{13}\rangle a_{12}$. Since there are no repeated eigenvalues, the coefficients of $a_{13}$ and $a_{12}$ are both nonzero. Since $M_{h_{11}} x_{11}=0$, we have
\begin{align}\label{eq:containment_verified}
x_{11} \in \langle a_{11} \rangle^{\perp} \cap \langle v \rangle^{\perp}.
\end{align}
But if it furthermore holds that $\langle x_{11},a_{12} \rangle=0$, then $x_{11} \perp U_1$, so $x_{11}=0$, a contradiction. This completes the proof.
\end{proof}

Thus, by examining the inner products $x_{i' k'}^T T_{\ell} x_{i k}$ we can partition the eigenvectors into triples $\{x_{i 1}, x_{i 2}, x_{i 3}\}$ and obtain a basis $\{T_f x_{i1},T_f x_{i2},T_f x_{i3}\}$ for $U_i$. To see that they form a basis, note that $T_f$ is a full-rank matrix, $x_{i1},x_{i2},x_{i3}$ are linearly independent, and $T_f x_{ik} \in U_i$ for $k=1,2,3$.

We now claim that $\langle x_{ik}\rangle^{\perp} \cap \langle h_{ik} \rangle^{\perp} \cap U_i = \langle a_{ik} \rangle$. By symmetry it suffices to prove this for $i=k=1$. The containment $\supseteq$ is easily verified from Proposition~\ref{prop:contraction-variety-chow} and~\eqref{eq:containment_verified}. For the containment $\subseteq$, it is evident from the formula~\eqref{eq:M_h} that
\begin{align*}
	\langle x_{1 1}\rangle^{\perp} \cap U_1 = \langle a_{1 1}, v \rangle.
\end{align*}
Since clearly $v \notin \langle h_{11} \rangle^{\perp}$, the claim is proven.

The above shows that we can obtain non-zero vectors $a_{i k}'$ in $\langle a_{i k} \rangle$. We know that $T=\sum_{i=1}^r c_i a_{i1}' a_{i2}' a_{i3}'$ for some scalars $c_i$, and it remains only to compute these $c_i$. To do this, we solve the system $T_f \; x_{i 1}= c_i (a_{i_1}' a_{i_2}' a_{i_3}')_f \; x_{i 1}$ for $c_i \in \K$. Here, we use again that $ x_{i1} \perp U_{i'} $ for $ i' \ne i $. Finally, we output $\{c_i a_{i_1}' a_{i_2}' a_{i_3}' : i =1,\dots, r\}$.

\medskip\noindent
We have proved the following theorem.

\begin{theorem}\label{thm:chow}
	Suppose that $T=\sum_{i=1}^m a_{i1}a_{i2} a_{i3}$ with $\{a_{ik}\}_{i\in [r],k=1,2,3}$ linearly independent. Then the following algorithm produces the terms $a_{i1} a_{i2} a_{i3}$:
	\begin{enumerate}
		\item Sample generic $ f, g, \ell \in \K^n $. 
		\item Find the $ 3r $ solutions $ \lambda_{ik}\in \C, x_{ik} \in \mathbb C^n $ to the generalized eigenvalue problem $ \det(T_f - \lambda T_g) = 0 $, where $ i = 1,\ldots,r $ and $ k = 1,2,3 $. 
		\item Set $ h_{ik} := f - \lambda_{ik} g $.
		\item For each vector $x_{i k}$, find the vectors $x_{i' k'}$ for which $x_{i' k'}^T T_{\ell} x_{i k} \neq 0$. This partitions the vectors into triples $\{x_{i 1}, x_{i 2}, x_{i 3}\}$.
		\item Let $U_i=\langle T_f x_{i1}, T_f x_{i2}, T_f x_{i 3} \rangle$.
		\item Compute nonzero vectors
		\begin{align*}
			a_{i k}' \in \langle x_{i k} \rangle^{\perp} \cap \langle h_{i k} \rangle^{\perp} \cap U_i.
		\end{align*}
		for $k=1,2,3$.
		\item Solve the system $T_f \; x_{i1}= c_i (a_{i1}' a_{i2}' a_{i3}')_f \; x_{i1}$ for $c_i \in \K$, and output $\{c_i a_{i1}' a_{i2}' a_{i3}' : i =1,\dots,r\}$.
	\end{enumerate}
\end{theorem}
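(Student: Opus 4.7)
The plan is to verify correctness of each step of the algorithm in sequence, relying on the hyperplane description of $X_T$ from Proposition~\ref{prop:contraction-variety-chow} and the pairing criterion of Proposition~\ref{prop:source-trick-for-dim-Ui-3}. For Steps 1--3, I would observe that for generic $f, g \in \K^n$, the affine line $\{f - \lambda g : \lambda \in \C\}$ meets each of the $3r$ hyperplanes $\langle a_{ik}\rangle^\perp$ transversally in one point; since $T_g$ is invertible for generic $g$ by conciseness, the generalized eigenvalue problem $\det(T_f - \lambda T_g) = 0$ has $3r$ distinct roots $\lambda_{ik}$, each with a one-dimensional eigenspace, yielding well-defined eigenvectors $x_{ik}$.

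For Step 4, I would first establish the orthogonality $x_{ik} \perp U_j$ for $j \neq i$: writing $T_{h_{ik}} = \sum_j M_{j h_{ik}}$, the direct-sum decomposition $\K^n = U_1 \oplus \cdots \oplus U_r$ forces $M_{j h_{ik}} x_{ik} = 0$ for all $j$, and for $j \neq i$ the matrix $M_{j h_{ik}}$ is symmetric with image $U_j$, so $x_{ik} \perp U_j$. Proposition~\ref{prop:source-trick-for-dim-Ui-3} then immediately validates the matching. Step 5 follows from the full rank of $T_f$ together with $T_f x_{ik} \in U_i$ and the linear independence of $x_{i1}, x_{i2}, x_{i3}$.

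The heart of the argument is Step 6: verifying $\langle x_{ik}\rangle^\perp \cap \langle h_{ik}\rangle^\perp \cap U_i = \langle a_{ik}\rangle$. The containment $\supseteq$ is immediate from Proposition~\ref{prop:contraction-variety-chow} and \eqref{eq:containment_verified}. For the reverse, by symmetry I reduce to $i = k = 1$ and use the rank-$2$ formula~\eqref{eq:M_h} to deduce $\langle x_{11}\rangle^\perp \cap U_1 = \langle a_{11}, v\rangle$, where $v = \langle h_{11}, a_{12}\rangle a_{13} + \langle h_{11}, a_{13}\rangle a_{12}$. A short computation gives $\langle h_{11}, v\rangle = 2\langle h_{11}, a_{12}\rangle \langle h_{11}, a_{13}\rangle$, which is nonzero by the distinct-eigenvalue condition, so intersecting with $\langle h_{11}\rangle^\perp$ removes $v$ and leaves exactly $\langle a_{11}\rangle$. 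Step 7 is then a single linear equation for $c_i$ with a unique solution, since $T_f x_{i1}$ and $(a'_{i1} a'_{i2} a'_{i3})_f \, x_{i1}$ are both nonzero and proportional vectors in $U_i$.

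The main obstacle I anticipate is the $\subseteq$ inclusion in Step 6, which is the only place where the explicit rank-$2$ structure of $M_{h_{ik}}$ is genuinely used; everything else amounts to a careful assembly of the two preceding propositions and standard linear algebra, with the only subtlety being a union of genericity conditions on $(f, g, \ell)$ which can be spelled out as a single nonempty Zariski-open set.
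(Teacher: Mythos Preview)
Your proposal is correct and follows essentially the same route as the paper: the text preceding the theorem assembles exactly the pieces you describe, invoking Proposition~\ref{prop:contraction-variety-chow} for Steps~2--3, the orthogonality $x_{ik}\perp U_j$ and Proposition~\ref{prop:source-trick-for-dim-Ui-3} for Step~4, and the rank-$2$ formula~\eqref{eq:M_h} for the $\subseteq$ inclusion in Step~6. Your explicit computation $\langle h_{11},v\rangle = 2\langle h_{11},a_{12}\rangle\langle h_{11},a_{13}\rangle$ is a welcome elaboration of what the paper dismisses as ``clearly $v\notin\langle h_{11}\rangle^\perp$''.
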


\subsection{Runtime}
The algorithm from \Cref{thm:chow} has runtime $ \mathcal{O}(n^3)$. Computing $ T_f, T_g,$ and $T_{\ell}$ takes time $ \mathcal{O}(n^3) $. Solving the generalized eigenvalue problem takes time $\mathcal{O}(n^3) $. In Step (4), it takes time $ \mathcal{O}(n^3)$ to compute the vectors $T_{\ell} x_{i' k'}$, and then time $\mathcal{O}(n)$ to compute each of the $\mathcal{O}(n^2)$ inner products $x_{i k}^T T_{\ell} x_{i' k'}$,  giving a total runtime of $ \mathcal{O}(n^3)$. Step (6) can be computed in time $\mathcal{O}(n^3)$. In Step (7), for each $i$ we need to compute $T_f \; x_{i1}$, compute $(a_{i1}' a_{i2}' a_{i3}')_f \; x_{i1}$, and solve for $c_i$. Each of these steps takes time $\mathcal{O}(n^2)$, so doing this for $i=1,\dots, n$ takes time $\mathcal{O}(n^3)$ in total.

	\section{Chow decompositions for higher order tensors}\label{sec:higher-order-chow}
In this section, we develop an algorithm to compute Chow decompositions for tensors of odd order $ 2d+1 $. In the case $ d = 1 $, we recover exactly the algorithm from the previous section. 

Let us start from a general tensor $ T $ of Chow rank $ r $, say with minimum rank decomposition
\begin{align*}
	T = \sum_{i = 1}^{r} a_{i1}\cdot \cdots \cdot a_{i(2d+1)} 
\end{align*}
and $ a_{ij} $, where $ i=1,\ldots,r $, $ j = 1,\ldots,2d+1 $ in general position. After contracting by a vector $ f\in \C^n $, we obtain a symmetric tensor of order $ 2d $, which may be reshaped into a matrix $ T_f\colon S^d(\mathbb C^n) \to S^d(\mathbb C^n) $. 
Clearly, we have that $ T_f = \sum_{i = 1}^{r} M_{if} $, where $ M_{if} $ denotes the matrix reshaping of the contraction of $ a_{i1}\cdot \cdots \cdot a_{i(2d+1)} $ by $ f $. 

First, we need to establish that the images of the matrices $ M_{1f},\ldots,M_{rf} $ will be in direct sum. Fortunately, this is guaranteed by a recent work, see \cite{Blomenhofer_Casarotti_2023}. 

\begin{lemma}\label{lem:higher-order-direct-sum}
	For $ b = (b_1,\ldots,b_{2d+1})\in (\mathbb C^n)^{2d+1} $, define  $ L(b) $ as the subspace of $ S^d(\mathbb C^n) $ spanned by all vectors
	\begin{align*}
		b_{i_1}\cdot \cdots \cdot b_{i_d} \in  S^d(\mathbb C^n), 
	\end{align*} 
	where $ i_1,\ldots,i_d $ are $ d $ pairwise distinct indices chosen from $ \{1,\ldots,2d+1\} $. 
	\noindent
	Then, a sum of $ r $ general spaces $ L(b^{(1)})  + \ldots + L(b^{(r)}) $ is direct, as long as 
	\begin{align*}
		r \le\dfrac{\binom{n+d-1}{d}}{\binom{2d+1}{d}} - \binom{2d+1}{d}. 
	\end{align*}
	In particular, the symmetric matrices $ M_f(b^{(1)}),\dots, M_f(b^{(r)})\colon S^d(\mathbb C^n) \to S^d(\mathbb C^n)  $, given by 
	\begin{align*}
		M_{f}(b^{(q)}) = \frac{d!^2}{2(2d+1)!} \sum_{I\dot{\cup} J \dot{\cup} \{k\} = [2d+1] } \langle f, b^{(q)}_k \rangle\cdot (b^{(q)}_I)(b^{(q)}_J)^{T} + \langle f, b^{(q)}_k \rangle \cdot (b^{(q)}_J)(b^{(q)}_I)^{T}, 
	\end{align*}
	have images in direct sum. Here, $b^{(q)}_I:=b^{(q)}_{i_1}\cdots b^{(q)}_{i_d}$, and the sum runs over all partitions of $ [2d+1] $ into disjoint sets, where $ I $ and $ J $ both contain $ d $ elements. 
\end{lemma}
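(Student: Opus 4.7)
My plan has two steps. First, I will reduce the ``in particular'' assertion about the matrices $M_f(b^{(q)})$ to the main directness claim for the spaces $L(b^{(q)})$. Second, I will invoke the identifiability results of \cite{Blomenhofer_Casarotti_2023} for the main claim itself.

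The reduction is a direct inspection of the explicit formula given for $M_f(b^{(q)})$. Each summand in that formula is an operator on $S^d(\mathbb C^n)$ of the form $b^{(q)}_I (b^{(q)}_J)^{T} + b^{(q)}_J (b^{(q)}_I)^{T}$, whose image is contained in $\vspan(b^{(q)}_I, b^{(q)}_J)$; since $|I|=|J|=d$, both generators lie in $L(b^{(q)})$ by definition. Summing over the partitions $I\dot{\cup} J \dot{\cup} \{k\} = [2d+1]$ yields $\im M_f(b^{(q)}) \subseteq L(b^{(q)})$. Hence directness of the spaces $L(b^{(q)})$ immediately yields directness of the images, and the ``in particular'' clause follows from the main claim.

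For the main claim, I would recast the directness of $L(b^{(1)}) + \cdots + L(b^{(r)})$ as a non-defectivity statement for a Grassmann secant. The assignment $b \mapsto L(b)$ defines a morphism from a dense open subset of $(\mathbb C^n)^{2d+1}$ into $\grass(\binom{2d+1}{d}, S^d(\mathbb C^n))$, and directness of the subspace sum for generic $b^{(q)}$ is equivalent to the $r$-th Grassmann secant of the image of this morphism achieving the expected dimension. The bound $r \le \binom{n+d-1}{d}/\binom{2d+1}{d} - \binom{2d+1}{d}$ is precisely the regime covered by \cite{Blomenhofer_Casarotti_2023}, whose Terracini-type analysis of tangent spaces to the Chow variety contains exactly this non-defectivity statement (the subtracted $\binom{2d+1}{d}$ term accounts for the infinitesimal slack along the orbit direction used in such arguments). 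I would quote the relevant theorem directly rather than redo the calculation.

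The main obstacle is precisely this invocation: genuine Terracini computations for Chow-type secants are delicate, requiring an induction on $d$ and careful handling of degenerations to rule out defective behaviour. This is the technical heart of \cite{Blomenhofer_Casarotti_2023}. Once their non-defectivity result is granted, everything else — the image containment $\im M_f(b^{(q)}) \subseteq L(b^{(q)})$ and the transfer of directness from $L$-spaces to images — is elementary linear algebra.
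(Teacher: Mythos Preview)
Your approach is essentially the same as the paper's: first observe $\im M_f(b^{(q)}) \subseteq L(b^{(q)})$ from the explicit formula, then invoke \cite{Blomenhofer_Casarotti_2023} for the directness of the $L$-spaces. The paper is slightly more precise about the invocation: rather than framing it as a Grassmann-secant non-defectivity or Terracini calculation, it applies \cite[Theorem 3.4]{Blomenhofer_Casarotti_2023} directly, after verifying its two hypotheses---that the family $b \mapsto L(b)$ is $\GL_n$-equivariant (since $U^{\otimes d} L(b) = L(Ub_1,\ldots,Ub_{2d+1})$) and that the ambient module $S^d(\mathbb C^n)$ is $\GL_n$-irreducible. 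Your description of the cited result as a ``Terracini-type analysis of tangent spaces to the Chow variety'' with an ``induction on $d$'' mischaracterizes what is actually being used: Theorem 3.4 there is an abstract statement about equivariant families of subspaces in irreducible modules, not a Chow-specific Terracini computation. This does not affect correctness, but you should name the hypotheses you are checking rather than gesture at the general methodology.
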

\begin{proof}
	First, note that the image of the matrix $ M_f(b) $ is contained in $ L(b) $ for all $ f $ and all $ b \in (\mathbb C^n)^{2d+1} $. Next, note that the class of all spaces $ L(b) $ is $ \GL_n $ invariant, since $ U^{\otimes d} L(b) =  L(Ub_1,\ldots,Ub_{2d+1}) $ for each matrix $ U\in \GL_n $. Additionally, $ S^d(\mathbb C^n) $ is an irreducible $ \GL_n $-module, by Schur-Weyl duality. 
	This means that the conditions of \cite[Theorem 3.4]{Blomenhofer_Casarotti_2023} are satisfied (if we take $ T_b := L(b) $ in \cite[Theorem 3.4]{Blomenhofer_Casarotti_2023} and $ X $ as the dense open subset of $(\mathbb C^n)^{2d+1} $ where $ L(b) $ has constant dimension).  
	We obtain that for general $ b^{(1)},\ldots,b^{(r)}\in (\mathbb C^n)^{2d+1} $, the spaces $ L(b^{(1)}),\ldots,L(b^{(r)}) $ are in direct sum. In particular, the images of $ M(b^{(1)}),\ldots,M(b^{(r)}) $ are in direct sum. 
\end{proof}

\begin{remark}\label{rem:generic-dim-of-L(b)}
	In \Cref{lem:higher-order-direct-sum}, the dimension of a space $ L(b) $ is upper bounded by $ \binom{2d+1}{d} $ for each $ b \in (\mathbb C^n)^{2d+1} $. Additionally, if $ 2d+1\le n $, then we can easily see that this bound is an equality for generic $ b $. Indeed, after a change of coordinates, it suffices to check this statement for the standard basis $ b = e = (e_1,\ldots,e_n) $. Then, the expressions $ b_I $ correspond bijectively to the square-free monomials $ x_I = \prod_{i\in I} x_i $ in variables $ x_1,\ldots,x_{2d+1} $ of degree $ d $, which are of course linearly independent. 
	Using that, it is immediate to see that the image of $ M_{f}(e) $ equals $ \langle e_{I} \mid I\subseteq [2d+1], |I| = d \rangle $ for general $ f $ and thus the rank of $ M_f(e) $ is $ \binom{2d+1}{d} $. 
\end{remark}

Let us now consider again the contraction variety. 
From \Cref{lem:higher-order-direct-sum}, we obtain that the rank of $ T_f $ drops below its generic rank, if and only if one of $ M_{1f},\ldots,M_{rf} $ has rank less than $ \binom{2d+1}{d} $. In the next step, we will examine when the latter can happen. Our goal is to prove the following statement. 

\begin{theorem}\label{thm:contraction-variety-higher-chow}
	Let $ T = \sum_{i = 1}^{r} a_{i1}\cdot \cdots \cdot a_{i(2d+1)} $ be a generic symmetric tensor of Chow-rank $ r $, where $ r \le \dfrac{\binom{n+d-1}{d}}{\binom{2d+1}{d}} - \binom{2d+1}{d}$ and $ 2d+1\le n $. 
	Then, its contraction variety is given by
	\begin{align*}
		X_T = \bigcup_{i = 1}^{r} \langle a_{i1} \rangle^{\perp} \cup \ldots \cup \langle a_{i(2d+1)} \rangle^{\perp}.
	\end{align*}
\end{theorem}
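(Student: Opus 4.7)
The plan is to reduce the claim to a single Chow rank-1 term via Lemma~\ref{lem:higher-order-direct-sum} and then show that the determinant of the contraction matrix for that term factors as a product of the $2d+1$ coordinate linear forms, each raised to the $d$-th Catalan number $C_d = \tfrac{1}{d+1}\binom{2d}{d}$. By Lemma~\ref{lem:higher-order-direct-sum} the images $\im M_{if}$ are in direct sum, so $T_f$ has rank strictly less than its generic value iff some $M_{if}$ does. Hence $X_T = \bigcup_i X_{T_i}$ with $T_i = a_{i1}\cdots a_{i(2d+1)}$. For generic $T$ the tuple $(a_{ij})_j$ is linearly independent, so after a linear change of coordinates we may assume $a_{ij}=e_j$ for $j=1,\dots,2d+1$. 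Writing $\alpha_k := \langle f, e_k\rangle$, it then suffices to show that $\det M_f(e)|_{L(e)}$ vanishes exactly when some $\alpha_k$ does.

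\textbf{Corank on coordinate hyperplanes.} Suppose $\alpha_{2d+1}=0$. The surviving summands in the explicit formula for $M_f(e)$ have $k\in [2d]$, so $I\cup J = [2d+1]\setminus\{k\}$ contains $2d+1$; by disjointness of $I$ and $J$, exactly one of them does. Decompose $L(e) = L_0\oplus L_1$ with $L_1 := \langle e_K : 2d+1 \in K,\,|K|=d\rangle$ and $L_0 := \langle e_K : 2d+1 \notin K,\,|K|=d\rangle$, an orthogonal decomposition under the standard inner product on $S^d(\C^n)$. Each surviving summand $\alpha_k(e_I e_J^T + e_J e_I^T)$ is block-antidiagonal in this decomposition, so $M_f(e)|_{L(e)}$ has the form $\bigl(\begin{smallmatrix}0 & A\\A^T & 0\end{smallmatrix}\bigr)$ and thus rank at most $2\min(\dim L_0,\dim L_1)=2\binom{2d}{d-1}$, i.e., corank at least $\binom{2d}{d}-\binom{2d}{d-1}=C_d$. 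By the standard fact that a rank-$c$ drop of a matrix polynomial at a point forces its determinant to vanish to order at least $c$ there, we obtain $\alpha_{2d+1}^{C_d}\mid \det M_f(e)|_{L(e)}$; symmetrically this holds for every $\alpha_k$.

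\textbf{Degree match and conclusion.} The entries of $M_f(e)|_{L(e)}$ are linear in $f$, so its determinant has degree at most $\binom{2d+1}{d}$. The linear forms $\alpha_k$ are coprime (the $e_k$ are linearly independent), and a direct computation gives the Catalan identity $(2d+1)C_d = \binom{2d+1}{d}$, so the divisor $\prod_k \alpha_k^{C_d}$ already accounts for the full degree. Therefore $\det M_f(e)|_{L(e)} = C\prod_k \alpha_k^{C_d}$ for some constant $C$, which is nonzero by Remark~\ref{rem:generic-dim-of-L(b)}. The vanishing locus is thus exactly $\bigcup_k\langle e_k\rangle^\perp$, completing the proof. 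The hard part is the corank-$C_d$ lower bound on $\{\alpha_k=0\}$: the block-antidiagonal structure of the contraction matrix, exposed by peeling off the index $2d+1$, is what upgrades the naive divisibility $\alpha_k \mid \det$ to $\alpha_k^{C_d} \mid \det$, exactly saturating the degree budget $\binom{2d+1}{d}$.
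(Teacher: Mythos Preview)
Your proof is correct and takes a genuinely different route from the paper. Both arguments reduce to a single rank-$1$ term via Lemma~\ref{lem:higher-order-direct-sum}, but the paper's Lemma~\ref{lem:rank-drops-of-Mb} then treats the two inclusions separately: a torus-scaling argument $e\mapsto\beta\cdot e$ shows $M_f(e)$ has full rank whenever all $\alpha_k\neq 0$, and a generic-line-meets-codimension-$1$ argument (plus $\mathfrak S_{2d+1}$-symmetry and irreducibility of the hyperplanes) gives the reverse inclusion. You instead compute $\det M_f(e)|_{L(e)}$ outright: the observation that on $\{\alpha_{2d+1}=0\}$ the matrix is block-antidiagonal with respect to the splitting of $L(e)$ by whether $2d+1$ lies in the index set gives corank $\geq C_d$ on the whole hyperplane, hence $\alpha_k^{C_d}\mid\det$ for each $k$, and the Catalan identity $(2d+1)C_d=\binom{2d+1}{d}$ saturates the degree. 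Your route is arguably cleaner for the theorem itself and delivers the exact factorization $\det=C\prod_k\alpha_k^{C_d}$ for free; combined with the pointwise inequality $\mathrm{ord}_{\alpha_k}\det\geq\mathrm{corank}$ it also pins down the corank on each hyperplane (away from the other hyperplanes) as exactly $C_d$, yielding an alternative proof of the rank statement in Proposition~\ref{prop:geom_mult} that bypasses the Johnson-graph eigenvalue computation. The paper's torus/line argument, by contrast, needs no determinant bookkeeping and makes the $\GL_n$-orbit structure explicit.
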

\begin{proof}
	By \Cref{lem:higher-order-direct-sum} and \Cref{rem:generic-dim-of-L(b)}, the matrices $ M_{if}=M_f(a_{i1},\dots, a_{i(2d+1)}) $ have rank equal to $ \binom{2d+1}{d} $, if one considers a contraction by a generic $ f\in \mathbb C^n $. By \Cref{lem:higher-order-direct-sum}, the rank of a generic contraction $ T_f $ therefore equals $ r\binom{2d+1}{d} $ under the given assumptions $ 2d+1\le n $ and $ r \le\dfrac{\binom{n+d-1}{d}}{\binom{2d+1}{d}} - \binom{2d+1}{d} $. In this regime, the contraction variety of a general Chow tensor of rank $ r $ is thus given by
	\begin{align*}
		X_T = \{f\in \mathbb C^n\mid \rk T_f < r\binom{2d+1}{d} \}
	\end{align*}
	By \Cref{lem:higher-order-direct-sum}, we have that the images of $ M_{1f},\ldots,M_{rf} $ are in direct sum. Therefore, 
	\begin{align*}
		X_T = \{f\in \mathbb C^n\mid \exists i\colon \rk M_{if} < \binom{2d+1}{d} \}. 
	\end{align*}
	We will show in the following \Cref{lem:rank-drops-of-Mb} that the rank of $ M_{if} $ drops if and only if $ f $ is orthogonal to one of the vectors $ a_{i1},\ldots,a_{i(2d+1)} $. Thus, we conclude with the desired result. 
\end{proof}

\begin{lemma}\label{lem:rank-drops-of-Mb}
	Let $ b_1,\ldots,b_{2d+1}\in \mathbb C^n $ be general and assume that $ 2d+1\le n $. Consider the Chow rank 1 tensor $ M(b) := b_1\cdots b_{2d+1} \in S^{2d+1}(\mathbb C^n) $ and for each $ f\in \mathbb C^n $ the matrices $ M_f(b) $ from \Cref{lem:higher-order-direct-sum}. Then, 
	\begin{align*}
		\rk M_f(b) < \binom{2d+1}{d} \Longleftrightarrow f \in \langle b_{1} \rangle^{\perp} \cup \ldots \cup \langle b_{2d+1} \rangle^{\perp}.
	\end{align*}
\end{lemma}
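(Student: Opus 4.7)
The plan is to reduce to the normalization $b_k = e_k$ for $k = 1,\ldots,2d+1$ using the genericity of $b$ and $2d+1 \le n$. Under this reduction, $L(b)$ becomes $L(e) = \langle e_I \mid I \subset [2d+1], |I|=d\rangle$. Relative to the standard pairing on $S^d(\mathbb C^n)$, the basis $\{e_K\}_{|K|=d}$ of $L(e)$ is orthogonal with common norm-squared $1/d!$, so a direct calculation starting from the formula in \Cref{lem:higher-order-direct-sum} gives the restricted matrix
\begin{align*}
	(\widetilde M_f)_{I,J} = \frac{d!}{(2d+1)!}\, f_{[2d+1]\setminus(I\cup J)} \quad \text{if } I\cap J = \emptyset,
\end{align*}
and zero otherwise. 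Since $\ker M_f(e) \supseteq L(e)^\perp$ and $\im M_f(e) \subseteq L(e)$, one has $\rk M_f(e) = \rk \widetilde M_f$, reducing the lemma to a combinatorial identity for $\widetilde M_f$.

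For the $(\Leftarrow)$ direction I assume, without loss of generality, that $f_{2d+1} = 0$, and split $\binom{[2d+1]}{d}$ into $A := \{K : 2d+1 \in K\}$ of size $\binom{2d}{d-1}$ and $B := \{K : 2d+1 \notin K\}$ of size $\binom{2d}{d}$. A short case analysis on where $2d+1$ sits among $I$, $J$, and the singleton $[2d+1]\setminus(I\cup J)$ shows that a surviving entry $(\widetilde M_f)_{I,J}$ requires exactly one of $I,J$ to contain $2d+1$. Ordering the basis as $(A,B)$, this forces the block-antidiagonal shape
\begin{align*}
	\widetilde M_f = \begin{pmatrix} 0 & X \\ Y & 0 \end{pmatrix},
\end{align*}
with $X$ of size $\binom{2d}{d-1}\times\binom{2d}{d}$ and $Y$ of size $\binom{2d}{d}\times\binom{2d}{d-1}$. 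Hence $\rk \widetilde M_f \le 2\binom{2d}{d-1} < \binom{2d}{d}+\binom{2d}{d-1} = \binom{2d+1}{d}$.

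For the $(\Rightarrow)$ direction, set $D(f) := \det \widetilde M_f$, a homogeneous polynomial of degree $\binom{2d+1}{d}$ in $f_1,\ldots,f_{2d+1}$. The corank bound above implies that the vanishing order of $D$ along each hyperplane $\{f_k = 0\}$ is at least $\binom{2d}{d} - \binom{2d}{d-1}$, so $f_k^{\binom{2d}{d}-\binom{2d}{d-1}}$ divides $D$ (the case $k \neq 2d+1$ follows from the $\mathfrak S_{2d+1}$-symmetry $\widetilde M_{\sigma\cdot f} = P_\sigma \widetilde M_f P_\sigma^{-1}$). Since the $f_k$ are pairwise coprime, $\prod_{k=1}^{2d+1} f_k^{\binom{2d}{d}-\binom{2d}{d-1}}$ divides $D$, and the identity $\binom{2d}{d}-\binom{2d}{d-1} = \frac{1}{d+1}\binom{2d}{d}$ shows this product has degree $(2d+1)\bigl(\binom{2d}{d}-\binom{2d}{d-1}\bigr) = \binom{2d+1}{d}$, matching $\deg D$. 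Therefore $D(f) = c\cdot\prod_{k=1}^{2d+1} f_k^{\binom{2d}{d}-\binom{2d}{d-1}}$ for some constant $c$, which is nonzero by \Cref{rem:generic-dim-of-L(b)} (generic rank is full). The lemma follows.

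The main obstacle I anticipate is the combinatorial bookkeeping in the $(\Leftarrow)$ step, and in particular confirming that the corank at $\{f_k=0\}$ really is at least $\binom{2d}{d} - \binom{2d}{d-1}$. It is the tightness of this bound that makes the degree count in $(\Rightarrow)$ close without leaving residual symmetric factors to rule out by other means.
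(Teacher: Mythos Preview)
Your argument is correct, and it takes a genuinely different route from the paper's. The paper proves ``$\Rightarrow$'' by a torus rescaling that reduces every $f$ with all $f_k\ne 0$ to the single case $f=(1,\dots,1)$, and proves ``$\Leftarrow$'' by a soft geometric argument: a generic pencil in the space of restrictions $M_\ell(e)|_{L(e)}$ must meet the singular locus, so some point $f-\lambda g$ lies in $X_{M(b)}$, and then irreducibility of hyperplanes plus $\mathfrak S_{2d+1}$-symmetry forces $\langle b_i\rangle^\perp\subseteq X_{M(b)}$ for all $i$. By contrast, you work entirely with the explicit matrix $\widetilde M_f$: the block-antidiagonal shape at $f_{2d+1}=0$ gives corank $\ge \binom{2d}{d}-\binom{2d}{d-1}$, and since corank $\ge c$ at $t=0$ forces $t^c\mid\det A(t)$ (row-reduce $A(0)$ so that $c$ rows vanish, over the function field in $f_1,\dots,f_{2d}$), you get $\prod_k f_k^{\,c}\mid D(f)$; the Catalan-number identity $(2d+1)\bigl(\binom{2d}{d}-\binom{2d}{d-1}\bigr)=\binom{2d+1}{d}$ then makes the degrees match. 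Your approach is more elementary and self-contained, and as a bonus it reads off the algebraic multiplicity $c=\binom{2d+1}{d}/(2d+1)$ of each hyperplane directly (which the paper observes only afterwards), while also giving the lower bound on the geometric multiplicity that the paper establishes separately in \Cref{prop:geom_mult}. The paper's argument, in turn, is coordinate-free and would transport more readily to settings where an explicit matrix is unavailable.
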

\begin{proof}
By using the $ \GL_n $ action, we may w.l.o.g. assume that $ b $ consists of standard basis vectors, so that $ b = e:= (e_1,\ldots,e_{2d+1}) $.

We begin by proving the direction ``$\implies$." Let $\alpha_i=\langle e_i,f\rangle$. Clearly, acting with the torus $ (\C^{\times})^{2d+1} \subseteq (\C^{\times})^{n} $ on $e$ cannot change the rank of $ M_f(e) $. If we let $\beta:=(\beta_1,\dots, \beta_{2d+1})$, then
\begin{align*}
M_f (\beta \cdot e)=\alpha_1\beta_2\cdots \beta_{2d+1} \frac{e}{e_1}  + \ldots + {\beta_1\cdots \beta_{2d}\alpha_{2d+1}}\frac{e}{{e_{2d+1}}},
\end{align*}
where $e/e_i:=e_1\cdots e_{\hat{i}} \cdots e_{2d+1}$ is the Chow rank 1 tensor of order $ 2d $ where the term $ e_{i} $ is omitted. Choosing $ \beta_i = \frac{\alpha_i}{\sqrt[2d]{\alpha_1\cdots \alpha_{2d+1}}} $, we obtain $M_{1,\dots, 1}(e)=\frac{e}{e_1}+\dots+\frac{e}{e_{2d+1}}$.
	Hence, $ M_f(e) $ is rank deficient if and only if $ M_{(1,\ldots,1)}(e) $ is rank deficient. Since a general vector $ f\in \mathbb C^n $ will satisfy the open condition $ \langle e_i, f \rangle \ne 0$, we conclude that $ \rk M_f(b) = \binom{2d+1}{d} $, if all $ \langle b_i, f \rangle $ are nonzero. This shows the direction ``$\implies$''.
 
	\medskip\noindent
	Let us now show the direction ``$\impliedby$''. Let $\Pi$ be the matrix for which $\Pi_{L(e)}=I_{L(e)}$ and $\Pi_{L(e)^{\perp}}=0$, where $L(e)^{\perp}$ is the orthogonal complement to $L(b)$ with respect to the bilinear pairing $\langle x,y\rangle=x^Ty$ (or the sesquilinear form; they are equivalent since $L(e)$ has a real basis).
	 Note that $M_{\ell}(e) \Pi = M_{\ell}(e)$ for any $\ell \in \C^n$. Let us view $M_{\ell}(e) \Pi$ as a map on $L(e)$, which is nonsingular for generic $\ell \in \C^n$.  Since the set of singular matrices has codimension 1, for generic $f,g \in \K^n$ the (complex) subspace spanned by $M_f(e) \Pi , M_g(e) \Pi$ must contain a matrix $M_{\ell}(e) \Pi$ whose image is properly contained in $L(e)$, i.e., a matrix which is singular on $ L(e) $. Hence, $M_{\ell}(e)$ has subgeneric rank.

	In particular, there exists some $ \lambda $ such that $ f-\lambda g\in X_{M(b)} $. By the already shown direction ``$\implies$'', we know that there exists some $ i\in \{1,\ldots,2d+1\} $ such that $ f-\lambda g \in \langle b_i \rangle^{\perp} $. Fix such $ i $. Note that $ f-\lambda g\notin \langle b_j \rangle^{\perp}  $ for $ j\ne i $, since a general line cannot intersect the codimension 2 subspace $ \langle b_i \rangle^{\perp} \cap \langle b_j \rangle^{\perp}$. Furthermore, there can only be precisely one $ \lambda $ such that  $ f-\lambda g\in \langle b_i \rangle^{\perp} $, since the degree of a hyperplane is $ 1 $.  It is therefore justified that we denote this unique $ \lambda $ by $ \lambda_i $. 
	
	We thus showed that the intersection $ B = X_{M(b)} \cap  \langle b_i \rangle^{\perp} \ne \{0\} $ is nontrivial. However, the set $ B $ has to be of codimension $ 1 $, since otherwise the general line $ \lambda \mapsto f - \lambda g $ could not intersect $ B $. Since $ \langle b_i \rangle^{\perp} $ is irreducible, this shows that $ \langle b_i \rangle^{\perp}\subseteq X_{M(b)} $. However, the matrix $ M(b) $ and thus also the contraction variety $ X_{M(b)} $ are invariant under permutations $ b \mapsto (b_{\sigma(1)},\ldots,b_{\sigma(2d+1)}) $, where $ \sigma\in \mathfrak{S}_{2d+1} $. Therefore, by symmetry, we conclude that $ \langle b_{1} \rangle^{\perp} \cup \ldots \cup \langle b_{2d+1} \rangle^{\perp} \subseteq X_{M(b)} $. This shows ``$\impliedby$''.
\end{proof}

\begin{remark}
	From the proof of \Cref{lem:rank-drops-of-Mb}, we conclude that the general line $ f - \lambda g $ intersects $ X_{M(b)} $ precisely in the $ 2d+1 $ points $ \lambda_i $ of the line for which $ f - \lambda_i g  \in \langle b_i \rangle^{\perp}$. Since $ \dim V = \binom{2d+1}{d} $, this means that by symmetry, each generalized eigenvalue has algebraic multiplicity $ \binom{2d+1}{d}/(2d+1) $. 
	
	We will show in the following that the geometric multiplicity is equal to the algebraic multiplicity for points in the generic orbit. This is nontrivial and needs a specific combinatorial argument, see \Cref{prop:geom_mult}. In fact, for non-generic orbits, it is not always true that the algebraic and geometric multiplicity agree. We will see a counterexample in \Cref{sec:cubic-nonindependent}: Already for $ d = 1 $, on the orbit of the W-tensor $ b_1^2 b_2 $, the algebraic multiplicity is $ 2 $ and the geometric multiplicity is $ 1 $. 
\end{remark}


\begin{prop}\label{prop:geom_mult}
	Let $T=e_1\cdots e_{2d+1}$ and let $\partial=e_1+\dots+e_{2d}$. Then $T_{\partial}$ has rank $2\binom{2d}{d-1}$.
\end{prop}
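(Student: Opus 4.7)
The plan is to compute $T_\partial$ explicitly on a convenient basis of $L(e)$ and then reduce the rank computation to a classical fact about inclusion matrices. First, I instantiate the formula for $M_f(e)$ from \Cref{lem:higher-order-direct-sum} with $f = \partial$. Since $\langle \partial, e_k \rangle = 1$ for $k \in [2d]$ and $\langle \partial, e_{2d+1}\rangle = 0$, only ordered partitions $I \dot\cup J \dot\cup \{k\} = [2d+1]$ with $k \le 2d$ contribute. The formula immediately gives $\im M_\partial(e) \subseteq L(e)$ and $L(e)^\perp \subseteq \ker M_\partial(e)$, so $\rk T_\partial = \rk\bigl( M_\partial(e)|_{L(e)} \bigr)$. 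I would then compute this restriction in the orthogonal basis $\{e_P : P \subseteq [2d+1], |P|=d\}$ of $L(e)$: for each $P$, expanding $M_\partial(e)\,e_P$ via the bookkeeping $J = P$ or $I = P$ in the partition sum shows that the coefficient of $e_Q$ is a fixed nonzero constant when $P \cap Q = \emptyset$ and $2d+1 \in P \cup Q$, and vanishes otherwise.

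Next, I split the $d$-subsets of $[2d+1]$ into \emph{special} subsets $P = \{2d+1\} \cup P'$ for $(d-1)$-subsets $P' \subseteq [2d]$ (there are $\binom{2d}{d-1}$ of these) and \emph{general} subsets $Q \subseteq [2d]$ with $|Q|=d$ (there are $\binom{2d}{d}$). Since the nonvanishing entries force exactly one of $P, Q$ to be special, the matrix takes the block anti-diagonal form
\begin{align*}
M_\partial(e)|_{L(e)} \;\propto\; \begin{pmatrix} 0 & B \\ B^T & 0 \end{pmatrix}, \qquad B_{Q,P'} = [\,P' \cap Q = \emptyset\,],
\end{align*}
where $B$ is a $\binom{2d}{d} \times \binom{2d}{d-1}$ matrix. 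Hence $\rk T_\partial = 2 \rk B$, and it remains to prove $\rk B = \binom{2d}{d-1}$.

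After relabelling rows by the complement $R := [2d] \setminus Q$ (a bijection on the $d$-subsets of $[2d]$), the condition $P' \cap Q = \emptyset$ becomes $P' \subseteq R$, identifying $B$ with the classical inclusion matrix $W_{d-1,d}(2d)$ of $(d-1)$-subsets into $d$-subsets of a $2d$-set. By the Gottlieb--Kantor theorem on inclusion matrices, $W_{d-1,d}(2d)$ has full column rank $\binom{2d}{d-1}$, as the hypothesis $(d-1)+d \leq 2d$ holds trivially. This yields $\rk T_\partial = 2 \binom{2d}{d-1}$. The main obstacle is this last full-rank claim; one can either invoke Gottlieb--Kantor as a black box, or argue directly by computing the Gram matrix $B^T B$, whose entries $(B^T B)_{P',P''} = \binom{2 + |P' \cap P''|}{d}$ lie in the Bose--Mesner algebra of the Johnson scheme $J(2d,d-1)$; its spectrum is explicitly computable and strictly positive, establishing invertibility of $B^T B$ and hence injectivity of $B$.
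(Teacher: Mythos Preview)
Your proof is correct and shares the same combinatorial backbone as the paper's: both reduce to the monomial basis $\{e_P : P \subseteq [2d+1],\ |P|=d\}$ and split according to whether $2d+1 \in P$. The execution differs in two ways. First, you recognize the block anti-diagonal shape $\begin{pmatrix} 0 & B \\ B^{T} & 0 \end{pmatrix}$ at the outset, so the identity $\rk T_\partial = 2\rk B$ is immediate; the paper instead splits the column space as $\mathcal{S} = \mathcal{Q} \oplus \mathcal{R}$ and computes $\dim \mathcal{Q}$ and $\dim \mathcal{R}$ separately, never observing that the two pieces are automatically of equal dimension. Second, you identify $B$ (after complementing rows) with the classical inclusion matrix $W_{d-1,d}(2d)$ and invoke Gottlieb--Kantor as a black box, whereas the paper effectively re-proves the needed special case by computing the spectra of $UD$ and $U'D'$ via the Johnson-graph eigenvalue formula (Lemma~\ref{lemma:johnson}). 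Your route is shorter and cites a standard result; the paper's route is self-contained and makes the Johnson-scheme mechanism explicit, which is arguably helpful given that the same association scheme reappears (implicitly) in your alternative $B^{T}B$ argument.
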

Note that this proposition shows that the kernel has dimension $\binom{2d+1}{d}-2\binom{2d}{d-1}=\binom{2d+1}{d}/(2d+1)$, proving that the geometric multiplicity is as expected.

The proof requires a well-known fact about Johnson graphs. For positive integers $k<n$, the \textit{Johnson graph} $J(n,k)$ is the undirected graph with vertices given by subsets of $[n]$ of size $k$, with two vertices adjacent if and only if the corresponding subsets intersect in exactly $k-1$ elements. Let $A(n,k)$ be the adjacency matrix of $J(n,k)$.
\begin{lemma}[Theorem 9.1.2 in \cite{brouwer2011distance}]\label{lemma:johnson}
	The eigenvalues of $A(n,k)$ are given by $\alpha_{n,k}(j):=(k-j)(n-k-j)-j$ with multiplicity $\binom{n}{j}-\binom{n}{j-1}$ for $j=0,\dots, \min\{k,n-k\}$.
\end{lemma}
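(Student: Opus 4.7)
The plan is to reshape $T_\partial$ as a matrix on $L(e)$ with basis $\{e_I : I\subset[2d+1], |I|=d\}$ using the explicit formula from \Cref{lem:higher-order-direct-sum}, and then exploit the fact that $\partial=e_1+\cdots+e_{2d}$ has a single vanishing coordinate $\langle \partial,e_{2d+1}\rangle=0$. Up to a nonzero constant, the $(I,J)$-entry of the reshaped matrix counts ordered partitions $[2d+1]=I\sqcup J\sqcup\{k\}$ with $|I|=|J|=d$ and $k\in[2d]$. Since $I,J$ are disjoint $d$-subsets, the complement element $k$ is uniquely determined, and $k\ne 2d+1$ iff $2d+1\in I\cup J$.

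Split the basis into $\mathcal{I}_1=\{I:2d+1\in I\}$ (of size $\binom{2d}{d-1}$) and $\mathcal{I}_2=\{I:2d+1\notin I\}$ (of size $\binom{2d}{d}$). The previous paragraph shows that the $\mathcal{I}_1\times\mathcal{I}_1$ block vanishes (two such $I,J$ cannot both contain $2d+1$ and be disjoint) and the $\mathcal{I}_2\times\mathcal{I}_2$ block vanishes ($k=2d+1$). Hence, in this ordering,
\[
T_\partial\big|_{L(e)} \;=\; c\begin{pmatrix} 0 & B \\ B^T & 0 \end{pmatrix},
\]
where $B$ is the $0/1$ biadjacency matrix indexed by pairs $(I,J)\in\mathcal{I}_1\times\mathcal{I}_2$ with $B_{I,J}=1$ iff $I\cap J=\emptyset$. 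Writing $I=I'\cup\{2d+1\}$ with $I'\in\binom{[2d]}{d-1}$, this is the bipartite non-intersection matrix on $\binom{[2d]}{d-1}\times\binom{[2d]}{d}$. Clearly $\rk(T_\partial)=2\rk(B)$, so it suffices to prove that $B$ has full row rank $\binom{2d}{d-1}$.

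The strategy for the rank of $B$ is to compute $BB^T$ and recognize it as a polynomial in a Johnson adjacency matrix. For $I_1',I_2'\in\binom{[2d]}{d-1}$, the $(I_1',I_2')$-entry of $BB^T$ counts $J\in\binom{[2d]}{d}$ disjoint from $I_1'\cup I_2'$, which (setting $s=|I_1'\cap I_2'|$) equals $\binom{s+2}{d}$. This is zero unless $s\in\{d-2,d-1\}$, giving $BB^T=(d+1)I+A(2d,d-1)$, where $A(2d,d-1)$ is the adjacency matrix of the Johnson graph $J(2d,d-1)$.

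By \Cref{lemma:johnson}, the eigenvalues of $BB^T$ are $(d+1)+\alpha_{2d,d-1}(j)=(d+1)+(d-1-j)(d+1-j)-j$ for $j=0,\dots,d-1$, which simplify to $(d-j)(d+1-j)$ and are strictly positive throughout that range. Hence $BB^T$ is positive definite, $B$ has full row rank $\binom{2d}{d-1}$, and $\rk(T_\partial)=2\binom{2d}{d-1}$ as claimed. The main obstacle is identifying the off-diagonal block structure and correctly extracting the Johnson-graph combinatorics from the $BB^T$ entries; once this algebraic reduction is in place, \Cref{lemma:johnson} finishes the argument cleanly and the nonnegativity of $(d-j)(d+1-j)$ is a short arithmetic check.
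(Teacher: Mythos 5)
Your proposal is not a proof of Lemma~\ref{lemma:johnson}. That lemma is a classical fact about Johnson schemes that the paper simply imports from \cite{brouwer2011distance}; the paper never proves it, and your argument also does not prove it, but rather uses it as a black box. What you have actually written is an alternative proof of \Cref{prop:geom_mult}, namely that $T_\partial$ has rank $2\binom{2d}{d-1}$. So, taken literally, the proposal addresses the wrong statement.

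Read as a proof of \Cref{prop:geom_mult}, however, the argument is correct and takes a genuinely different route from the paper's. The paper works with the \emph{column space} $\mathcal{S}$ of $T_\partial$: it splits $\mathcal{S}=\mathcal{Q}\oplus\mathcal{R}$ according to whether $2d+1$ appears in the index set, and then bounds each summand separately by composing raising and lowering operators, recognizing in turn $UD=(d+1)I+A(2d,d+1)$, $D'U'=(d+1)I+A(2d,d-1)$, and $U'D'=dI+A(2d,d)$, and reading off eigenvalues from \Cref{lemma:johnson}. Your proof instead treats $T_\partial$ directly as a matrix on $L(e)$: using the single vanishing coordinate $\langle\partial,e_{2d+1}\rangle=0$, you show the matrix is block anti-diagonal,
\[
T_\partial\big|_{L(e)} = c\begin{pmatrix}0 & B\\ B^T & 0\end{pmatrix},
\]
with $B$ the bipartite non-intersection matrix between $\binom{[2d]}{d-1}$ and $\binom{[2d]}{d}$, so that $\rk(T_\partial)=2\rk(B)$. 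You then compute the Gram matrix $BB^T=(d+1)I+A(2d,d-1)$ and observe that its eigenvalues $(d-j)(d+1-j)$, $j=0,\dots,d-1$, are all strictly positive, whence $B$ has full row rank $\binom{2d}{d-1}$. This is tighter than the paper's route: a single Gram computation and one invocation of \Cref{lemma:johnson} replace the paper's three operator compositions, and positive definiteness of $BB^T$ is a more transparent reason for the rank than the separate injectivity arguments for $D$ and $U'$. The combinatorics you need to verify (the block structure from the uniquely determined complement element $k$, and the count $\binom{s+2}{d}$ with $s=|I_1'\cap I_2'|$) all check out. I would keep your proof as an alternative, but be aware it presupposes Lemma~\ref{lemma:johnson} rather than establishing it.
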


\noindent
Now we can prove the proposition.

\begin{proof}[Proof of Proposition~\ref{prop:geom_mult}]
	For a subset $S \subseteq [2d+1]$ we use the shorthand $e_S := \prod_{i \in S} e_i$. Also, we denote by $ e_1\cdots e_{\hat{i}} \cdots e_{2d} $ the product where index $ i $ is omitted.  Note that
	\begin{align}\label{eq:Tdel}
		T_{\partial}&=e_{2d+1}(\sum_{i=1}^{2d} e_1\cdots e_{\hat{i}} \cdots e_{2d})\\
		&= \binom{2d}{d}^{-1} \sum_{i=1}^{2d} \sum_{\substack{S \subseteq [2d+1]\setminus\{i\} \\ |S|=d}} e_S \otimes e_{[2d+1]\setminus (S \cup \{i\})} \nonumber\\
		&=\binom{2d}{d}^{-1}  \sum_{\substack{S \subseteq [2d+1]\\ |S|=d}} e_S \otimes \left(\sum_{\substack{i \notin S \cup \{2d+1\}}} e_{[2d+1]\setminus(S \cup \{i\})}\right).\nonumber
	\end{align}
	As mentioned in \Cref{rem:generic-dim-of-L(b)}, the tensors $e_S$ are linearly independent, so the rank of $T_{\partial}$ is equal to the dimension of the subspace
	\begin{align*}
		\mathcal{S}:=\bigg\langle \sum_{\substack{i \notin S \cup \{2d+1\}}} e_{[2d+1]\setminus(S \cup \{i\})} : S \subseteq [2d+1], |S|=d \bigg\rangle.
	\end{align*}
	Note that we can split by whether $ S $ contains $ d+1 $ and we obtain
	\begin{align*}
		\mathcal{S}=\: &\bigg\langle \sum_{i \in Q} e_{Q \setminus \{i\}} : Q \subseteq [2d], |Q|=d+1 \bigg\rangle\\
		\oplus\: &\bigg\langle \sum_{i \in R} e_{(R \cup \{2d+1\}) \setminus \{i\}} : R \subseteq [2d], |R|=d \bigg\rangle
	\end{align*}
	We denote these two spaces by $\mathcal{Q}$ and $\mathcal{R}$, respectively. Note that $\mathcal{Q}$ and $\mathcal{R}$ are indeed in direct sum, as they are supported on disjoint sets of monomials. To complete the proof, we will show that $\dim(\mathcal{Q})=\dim(\mathcal{R})=\binom{2d}{d-1}$. We count the dimensions of these two vector spaces separately.
	\begin{claim}
		$\dim(\mathcal{Q})=\binom{2d}{d-1}$.
	\end{claim}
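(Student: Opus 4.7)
I will reinterpret $\dim(\mathcal{Q})$ as the rank of an incidence matrix and then use the Johnson graph eigenvalues supplied by Lemma~\ref{lemma:johnson}. Let $B$ be the $\binom{2d}{d+1} \times \binom{2d}{d}$ incidence matrix with rows indexed by $(d+1)$-subsets $Q \subseteq [2d]$ and columns indexed by $d$-subsets $S \subseteq [2d]$, with $B_{Q,S} = 1$ if $S \subseteq Q$ and $0$ otherwise. Since the tensors $e_S$ with $|S| = d$ are linearly independent (\Cref{rem:generic-dim-of-L(b)}), $\dim(\mathcal{Q})$ equals $\mathrm{rank}(B)$. Because $\binom{2d}{d+1} = \binom{2d}{d-1}$, it suffices to prove that $B$ has full row rank.

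To do this, I will show that the Gram matrix $BB^T$ is positive definite. A direct count gives $(BB^T)_{Q, Q'} = \binom{|Q \cap Q'|}{d}$, which equals $d+1$ when $Q = Q'$, equals $1$ exactly when $|Q \cap Q'| = d$ (i.e.\ when $Q$ and $Q'$ are adjacent in the Johnson graph $J(2d, d+1)$), and is $0$ otherwise. Hence
\begin{align*}
BB^T = (d+1)\, I + A(2d, d+1).
\end{align*}
Applying Lemma~\ref{lemma:johnson} with $n = 2d$ and $k = d+1$, the eigenvalues of $A(2d, d+1)$ are $\alpha_{2d, d+1}(j) = (d+1-j)(d-1-j) - j$ for $j = 0, \ldots, d-1$. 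Using $(d+1-j)(d-1-j) = (d-j)^2 - 1$, the eigenvalues of $BB^T$ simplify to $d + (d-j)^2 - j$. For $0 \le j \le d-1$ we have $(d-j)^2 \ge 1$, so each eigenvalue is at least $d + 1 - (d-1) = 2 > 0$. Thus $BB^T$ is invertible, so $B$ has full row rank, and $\dim(\mathcal{Q}) = \binom{2d}{d+1} = \binom{2d}{d-1}$, as required.

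The only substantive step is the eigenvalue computation; once $BB^T$ is expressed as $(d+1)I + A(2d, d+1)$, the positivity check reduces to the elementary inequality $(d-j)^2 \ge j - d + 1$ for $j \le d-1$. I expect the main potential pitfall to be bookkeeping: making sure the index $j$ in Lemma~\ref{lemma:johnson} runs up to $\min\{k, n-k\} = d-1$ rather than $d$, since an eigenvalue of $0$ would appear precisely at $j = d$ (where $d + (d-j)^2 - j = 0$) and would break the argument — but this value is excluded by the range in the lemma. The analogous count for $\mathcal{R}$ should follow by an almost identical incidence-matrix argument on $[2d]$.
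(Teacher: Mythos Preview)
Your proof is correct and essentially identical to the paper's. The paper introduces the down-map $D:V_{d+1}\to V_d$ and up-map $U:V_d\to V_{d+1}$ and shows $UD=(d+1)I+A(2d,d+1)$ is invertible via Lemma~\ref{lemma:johnson}; in your notation $B=U$ and $B^T=D$, so your Gram matrix $BB^T$ is exactly the paper's $UD$, and the eigenvalue check is the same (the paper factors the eigenvalues as $(d+1-j)(d-j)$, which equals your $d+(d-j)^2-j$).
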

	Let $V_{k}:=\langle e_{S} : S \subseteq [2d], |S|=k \rangle$ for $ k \in \{d, d+1\} $, and define a linear map $D : V_{d+1} \rightarrow V_d$ by $D(e_Q)=\sum_{i \in Q} e_{Q \setminus \{i\}}$. Similarly, define $U: V_d \rightarrow V_{d+1}$ by $U(e_T)=\sum_{i \notin T} e_{T \cup \{i\}}$. Note that $\mathcal{Q}=D(V_{d+1})$. Since $\dim(V_{d+1})=\binom{2d}{d-1}$, it suffices to prove that $D$ is injective. We will do this by proving that $U\circ D: V_{d+1} \rightarrow V_{d+1}$ is invertible. Note that
	\begin{align*}
		UD(e_Q)=(d+1) e_Q + \sum_{\substack{S \subseteq [2d]\\|S|=d+1\\ |S\cap Q|=d}} e_S,
	\end{align*}
	so $UD=(d+1) I + A(2d,d+1)$. Recall that $ A(2d,d+1) $ denotes the adjacency matrix of the Johnson graph. By Lemma~\ref{lemma:johnson}, this map has eigenvalues
	\begin{align*}
		(d+1)+(d+1-j)(2d-(d+1)-j)-j=(d+1-j)(d-j)
	\end{align*}
	for $j=0,\dots, d-1$. Since these are all nonzero, $UD$ is invertible and the claim is proven.
	
	\begin{claim}
		$\dim(\mathcal{R})=\binom{2d}{d-1}$.
	\end{claim}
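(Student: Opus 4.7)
The plan is to reduce this second claim to the same Johnson-graph machinery used for $\mathcal{Q}$, after factoring out the variable $e_{2d+1}$. My first step would be to observe that for $R \subseteq [2d]$ with $|R|=d$ and any $i \in R$, we have $(R \cup \{2d+1\}) \setminus \{i\} = (R \setminus \{i\}) \cup \{2d+1\}$, so each generator of $\mathcal{R}$ can be rewritten as
\[
    \sum_{i \in R} e_{(R \cup \{2d+1\}) \setminus \{i\}} \;=\; e_{2d+1} \cdot \sum_{i \in R} e_{R \setminus \{i\}} \;=\; e_{2d+1}\cdot D(e_R),
\]
where $D : V_d \to V_{d-1}$ is the down map from the previous claim. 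Since multiplication by $e_{2d+1}$ sends the basis $\{e_T : T \subseteq [2d], |T|=d-1\}$ of $V_{d-1}$ to a set of distinct degree-$d$ monomials (each one containing the variable $e_{2d+1}$), it is injective on $V_{d-1}$, so $\dim \mathcal{R} = \dim D(V_d)$. It thus suffices to show that $D : V_d \to V_{d-1}$ is surjective, which will follow if $DU : V_{d-1} \to V_{d-1}$ is invertible.

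Next, I would carry out the same splitting of the double sum as in the previous claim. For $T \subseteq [2d]$ with $|T| = d-1$, terms with $i = j$ in $DU(e_T) = \sum_{j\notin T}\sum_{i \in T\cup\{j\}} e_{(T\cup\{j\})\setminus\{i\}}$ each give $e_T$ (contributing $(d+1)e_T$ in total), while terms with $i \in T$, $j \notin T$ give one copy of $e_S$ for each set $S \subseteq [2d]$ of size $d-1$ with $|S \cap T| = d-2$. This yields
\[
    DU = (d+1) I + A(2d, d-1).
\]
By \Cref{lemma:johnson}, its eigenvalues are $(d+1) + (d-1-j)(d+1-j) - j$ for $j = 0, \dots, d-1$. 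Under the substitution $a := d-j \in \{1, \dots, d\}$, this simplifies to $a(a+1)$, which is strictly positive. Hence $DU$ is invertible, $D$ is surjective, and $\dim \mathcal{R} = \dim V_{d-1} = \binom{2d}{d-1}$.

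The main thing to be careful about is really just the reduction step — keeping track of the ambient spaces and verifying that multiplication by $e_{2d+1}$ is injective on the relevant subspace. After that, the computation is essentially the same one already performed for $\mathcal{Q}$, except one level lower in the lattice: the first claim invoked the Johnson graph $A(2d, d+1)$ to go from $V_{d+1}$ to $V_d$, while here we invoke $A(2d, d-1)$ to go from $V_d$ to $V_{d-1}$. The shift in the identity part of $DU$ (from $d+1$ to $d+1$ again, but paired with different Johnson eigenvalues) happens to remain in the invertible regime, so no new obstacle appears.
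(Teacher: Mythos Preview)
Your proof is correct. It uses the same reduction as the paper (factoring out $e_{2d+1}$ to identify $\mathcal{R}$ with the image of the down map $D': V_d \to V_{d-1}$) and the same Johnson-graph identity $D'U' = (d+1)I + A(2d,d-1)$. The only difference is in how you extract the conclusion: from $D'U'$ invertible on $V_{d-1}$ you deduce directly that $D'$ is surjective, giving $\dim D'(V_d) = \dim V_{d-1} = \binom{2d}{d-1}$. The paper instead uses the same invertibility to conclude that $U'$ is injective, then passes to $\rk(U'D')$ and computes this via a second Johnson-graph identity $U'D' = dI + A(2d,d)$, reading off the rank from the multiplicity of the zero eigenvalue. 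Your route is shorter and avoids this extra computation entirely. (One notational quibble: the $D$ and $U$ you invoke here act between $V_d$ and $V_{d-1}$, not between $V_{d+1}$ and $V_d$ as in the previous claim, so strictly speaking they are new maps---the paper names them $D'$ and $U'$---though of course they are defined by the same formula.)
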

	\noindent
	To prove the claim, first note that $\mathcal{R}$ is isomorphic to the space
	\begin{align*}
		\mathcal{R}':=\bigg\langle \sum_{i \in R} e_{R \setminus \{i\}} : R \subseteq [2d], |R|=d \bigg \rangle,
	\end{align*}
	Define $D' : V_d \rightarrow V_{d-1}$ as $D'(e_R)=\sum_{i \in R} e_{R \setminus \{i\}}$ and $U': V_{d-1} \rightarrow V_d$ as $U'(e_S)=\sum_{i \notin S} e_{S \cup \{i\}}$. Arguing similarly to above, we have that $D'U'=(d+1)I+A(2d,d-1)$, which has eigenvalues $(d-j)(d+1-j)$ for $j=0,\dots, d-1$. Hence, $D'U'$ is invertible, so $U'$ is injective. Hence, $\dim(\mathcal{R}')=\rk(D') = \rk (U'D')$. Arguing similarly, we have $U'D'=d I + A(2d,d)$, which has eigenvalues $(d-j)(d-j+1)$ for $j=0,\dots, d$, and the zero eigenvalue has multiplicity $\binom{2d}{d}-\binom{2d}{d-1}$. This proves that $\rk(U'D')=\binom{2d}{d-1}$, and completes the proof.
\end{proof}

Finally, we need a criterion to group eigenvectors together, which belong to the same Chow rank 1 term. We will use a criterion similar to Step (4) in the algorithm of \Cref{thm:chow}, but this time it requires a bit more technical effort to prove. The following Proposition will lay the groundwork for the grouping criterion, by explicitly describing the eigenvectors. 

\begin{prop}\label{prop:grouping-test}
	Let $T=e_1 \cdots e_{2d+1}$, let $\partial=e_1+\dots + e_{2d}$ and $ \partial_{d+1}=e_1+\dots + e_{2d+1} $,  let $y=\prod_{j=1}^d (e_{2j}-e_{2j-1})$, let $\mu=e_1+\dots+e_{2d-1}+e_{2d+1}$, and let $z=(e_{2d+1}-e_{2d})\prod_{j=1}^{d-1} (e_{2j}-e_{2j-1})$. Then
	\begin{align}\label{eq:evector_check}
		T_{\partial} y =0,
	\end{align}
	and similarly,
	\begin{align}\label{eq:evector_check2}
		T_{\mu} z =0.
	\end{align}
	Furthermore,
	\begin{align}\label{eq:contraction_nonzero}
		z^T T_{\partial_{2d+1}} y \neq 0.
	\end{align}
\end{prop}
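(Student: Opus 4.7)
The plan is to translate each identity into a polynomial statement via the correspondence that identifies a symmetric tensor $F \in S^m(\C^n)$ with its polynomial $P_F(x) := F(x,x,\ldots,x)$. Under this correspondence, contraction by a vector $f$ becomes the directional derivative $\nabla_f := \sum_i f_i \partial_{x_i}$ applied to the polynomial, and the action of $T_f : S^d(\C^n) \to S^d(\C^n)$ on a decomposable element $v = v_1\cdots v_d$ becomes $\nabla_{v_1} \cdots \nabla_{v_d}$ applied to $\nabla_f P_T$, up to a nonzero combinatorial constant. For $T = e_1 \cdots e_{2d+1}$ we have $P_T = x_1 \cdots x_{2d+1}$; denote the differential operators associated to $y$ and $z$ by $Y$ and $Z$, respectively.

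For~\eqref{eq:evector_check}: Direct computation gives $\nabla_\partial P_T = x_{2d+1} \cdot e_{2d-1}(x_1, \ldots, x_{2d})$, where $e_k$ denotes the degree-$k$ elementary symmetric polynomial. Since $Y = \prod_{j=1}^d (\partial_{x_{2j}} - \partial_{x_{2j-1}})$ involves only $\partial_{x_1}, \ldots, \partial_{x_{2d}}$, we have $Y \cdot \nabla_\partial P_T = x_{2d+1} \cdot Y\, e_{2d-1}(x_1,\ldots, x_{2d})$, so it suffices to prove $Y\,e_{2d-1}(x_1, \ldots, x_{2d}) = 0$. I would do this by induction on the number of factors applied, using the splitting
\[
    e_{2m-1}(x_{2j-1}, x_{2j}, x_{2j+1}, \ldots, x_{2j+2m-2}) = (x_{2j-1} + x_{2j}) \cdot x_{2j+1} \cdots x_{2j+2m-2} + x_{2j-1} x_{2j} \cdot e_{2m-3}(x_{2j+1}, \ldots, x_{2j+2m-2}),
\]
and observing that $(\partial_{x_{2j}} - \partial_{x_{2j-1}})$ annihilates the first summand and extracts $(x_{2j-1} - x_{2j})$ from the second. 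After $d$ iterations, the residual is the empty elementary symmetric polynomial $e_{-1}$, which is zero. Claim~\eqref{eq:evector_check2} then follows by the identical argument after the analogous factorization $\nabla_\mu P_T = x_{2d} \cdot e_{2d-1}(x_1, \ldots, x_{2d-1}, x_{2d+1})$: the operator $Z$ acts on the remaining $2d$ variables with the same inductive structure as $Y$, modulo a relabeling of the index pairs.

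For~\eqref{eq:contraction_nonzero}: A direct computation gives $\nabla_{\partial_{2d+1}} P_T = e_{2d}(x_1, \ldots, x_{2d+1})$, and the pairing $z^T T_{\partial_{2d+1}} y$ corresponds (up to nonzero scaling) to the scalar $YZ \cdot e_{2d}(x_1, \ldots, x_{2d+1})$, since $YZ$ is a degree-$2d$ operator. The operators $Y$ and $Z$ share the common factors $\prod_{j=1}^{d-1}(\partial_{x_{2j}} - \partial_{x_{2j-1}})$, so $YZ$ contains these squared. The key identity $(\partial_{x_b} - \partial_{x_a})^2 (x_a x_b Q) = -2Q$ for $Q$ not involving $x_a, x_b$, applied iteratively for $j = 1, \ldots, d-1$, reduces $e_{2d}(x_1, \ldots, x_{2d+1})$ to $(-2)^{d-1} \cdot e_2(x_{2d-1}, x_{2d}, x_{2d+1})$. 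The two remaining distinct factors of $YZ$ then act directly on this $e_2$ to produce an explicit nonzero constant (up to sign, $\pm 2^{d-1}$). The main obstacle is the bookkeeping for~\eqref{eq:contraction_nonzero}: one must carefully identify which factors of $YZ$ are shared versus distinct and track the squared-factor reduction cleanly. By contrast, the inductive arguments for~\eqref{eq:evector_check} and~\eqref{eq:evector_check2} are essentially identical and follow immediately from the splitting above.
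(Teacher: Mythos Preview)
Your proof is correct, but it proceeds quite differently from the paper's. You translate everything through the polynomial/apolarity dictionary and then argue via an inductive splitting of elementary symmetric polynomials for~\eqref{eq:evector_check}--\eqref{eq:evector_check2}, and via the squared-difference reduction $(\partial_b-\partial_a)^2(x_ax_bQ)=-2Q$ for~\eqref{eq:contraction_nonzero}. The paper instead establishes a single structural identity, namely that contracting $T$ along $d$ slots by $y$ gives $\alpha\, e_{2d+1}\, y$ for a nonzero scalar $\alpha$. From this, all three claims fall out with minimal further work: the vanishing~\eqref{eq:evector_check} becomes a short coefficient-cancellation (or, in your language, the observation that $\nabla_\partial$ annihilates each factor $x_{2k}-x_{2k-1}$), and~\eqref{eq:contraction_nonzero} reduces to the single observation $T_{\partial_{2d+1}}y=\alpha y$, so that $z^T T_{\partial_{2d+1}} y=\alpha\, z^T y$ is just a direct inner product of two explicit products of differences. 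Your route is self-contained and requires no such intermediate lemma, at the cost of a longer computation for~\eqref{eq:contraction_nonzero}; the paper's route invests in one identity up front and then gets the nonvanishing essentially for free.
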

\begin{proof}
	Denote by $ \sigma(R) $ the number of odd elements in a set $ R $. Note that
	\begin{align*}
		y=\sum_{\substack{R \subseteq [2d],\: |R|=d\\ \forall  j\in [d]\colon |R \cap \{2j-1,2j\}|=1}} (-1)^{\sigma(R)} e_R.
	\end{align*}
	From here, it is easily calculated that for the contraction of $ T $ by $ y $, we have
	\begin{align*}
		(I_n^{\otimes d+1} \otimes y^{T})T= \alpha \; e_{2d+1} y
	\end{align*}
	for some non-zero scalar $\alpha \in \K^{\times}$. We first verify~\eqref{eq:evector_check}. Note that
	\begin{align*}
		T_{\partial} y = \alpha \sum_{\substack{R \subseteq [2d]\\ |R|=d\\ |R \cap \{2j-1,2j\}|=1 \forall  j\in [d]}} (-1)^{\sigma(R)} w_R,
	\end{align*}
	where
	\begin{align*}
		w_R=e_{2d+1}\sum_{i \in R} e_{R \setminus \{i\}}.
	\end{align*}
	
	Let $S \subseteq [2d]$ be a subset of size $d-1$. If $|S \cap \{2j,2j-1\}|=2$ for some $j \in [d]$, then none of the $w_R$ are supported on $S \cup \{2d+1\}$. Otherwise, the only possibility is that $|S \cap \{2j,2j-1\}|=1$ for all but one index $j' \in [d]$, and for this index $|S \cap \{2j',2j'-1\}|=0$. In this case, there are only two choices of $R$ for which $w_{R}$ is supported on $S \cup \{2d+1\}$, namely $R=S \cup \{2j'\}$ and $R=S \cup \{2j'-1\}$. The coefficients of $w_R$ for these two choices are additive inverses, hence $T_{\partial} y$ is not supported on $S \cup \{2d+1\}$. This verifies~\eqref{eq:evector_check}. By symmetry,~\eqref{eq:evector_check2} also holds.
	
	It remains to verify~\eqref{eq:contraction_nonzero}. Note that
	\begin{align*}
	T_{\partial_{2d+1}} y = \alpha y,
	\end{align*}
	so
	\begin{align*}
	z^T T_{\partial_{2d+1}} y=\alpha \; z^T y= \alpha \; 2^{d-1} (e_{2d+1}-e_{2d-1})^T (e_{2d}-e_{2d-1})= \alpha \; 2^{d-1} \neq 0.
	\end{align*}
	This completes the proof.
\end{proof}

Now for the algorithm. Let us write $ U_i = \langle a_{i,J} : |J| = d \rangle $ for the space spanned by the ``monomials'' of degree $ d $ in $ a_i = (a_{i1},\ldots,a_{id}) $. 
\begin{theorem}
	Suppose that $T=\sum_{i=1}^r a_{i1}\ldots a_{i(2d+1)}$ with $\{a_{ik}\}_{i\in [r],k=1,2,3}$ a set of points in general position. Denote $ U_i $ as described above. Then, the following algorithm produces the terms $a_{i1}\ldots a_{i(2d+1)}$ in time $ \mathcal{O}(n^{3d+1}) $, supposed that $r \le \dfrac{\binom{n+d-1}{d}}{\binom{2d+1}{d}} - \binom{2d+1}{d}$ and $ 2d+1\le n $. 
	\begin{enumerate}
		\item Sample generic $ f, g, \ell \in \K^n $ and compute the matrices $ T_f, T_g, T_{\ell} $ 
		\item By \Cref{thm:contraction-variety-higher-chow}, the generalized eigenvalue problem $ \det(T_f - \lambda T_g) = 0 $ has $ (2d+1)r $ solutions $ \lambda_{ik} $, with eigenspaces $ V_{ik} $ of dimension $ \binom{2d+1}{d}/(2d+1) $, by \Cref{prop:geom_mult}. Represent each eigenspace by a matrix $ A_{ik} $, whose columns form a basis for the eigenspace. 
		\item Set $ h_{ik} := f - \lambda_{ik} g $.  
		\item For each matrix $A_{i k}$, find the matrices $A_{i' k'}$ for which $A_{i' k'}^T T_{\ell} A_{i k} \neq 0$. This partitions the matrices into sets $\{A_{i 1},\ldots,A_{i(2d+1)}\}$.
		\item Compute $ \im T_\ell A_{i1}  + \ldots + \im T_\ell A_{i(2d+1)} $, which is equal to $ U_i $.
		\item Having computed bases for the spaces $U_i$, construct matrices $\Pi_i$ for which $\Pi_i |_{U_i}=I_{U_i}$ and $\Pi_i |_{U_j}=0$ for $j \neq i$. 
		\item Compute $ a_{ik}' = ((h_{ik}^{T})^{\otimes d-1} \otimes I_n) \Pi_i ((h_{ik}^{T})^{\otimes d+1} \otimes I_n^{\otimes d})T $.  Each $ a_{ik}' $ will be a multiple of $ a_{ik} $. 
		\item For each $ i=1,\ldots,r $, pick a general element $ x_{i1}\in V_{i1} $ and solve the system $T_{\ell} \; x_{i1}= c_i (a_{i1}' \cdots a_{i(2d+1)}')_{\ell} \; x_{i1}$ for $c_i \in \K$.
		\item Output $\{c_i a_{i1}' \cdots a_{i(2d+1)}' : i =1,\dots, r\}$.
	\end{enumerate}
\end{theorem}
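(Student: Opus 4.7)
The plan is to walk through the algorithm step by step, verifying correctness using the tools already established in this section, and finally to tally the runtime. Steps 1--3 are immediate from \Cref{thm:contraction-variety-higher-chow} (which locates the $(2d+1)r$ distinct generalized eigenvalues of the pencil $(T_f, T_g)$ on the line $f - \lambda g$) and \Cref{prop:geom_mult} (which guarantees the geometric multiplicity of each $\lambda_{ik}$ equals $\binom{2d+1}{d}/(2d+1)$, so the eigenspaces $V_{ik}$ are well-defined of this dimension). Before proceeding to Step 4, I would first record the structural fact that drives the rest of the algorithm: every column of $A_{ik}$ is orthogonal to $U_j$ for each $j \neq i$. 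Indeed, $T_{h_{ik}} A_{ik} = 0$, and since the images $\mathrm{im}(M_{jh_{ik}}) = U_j$ are in direct sum by \Cref{lem:higher-order-direct-sum}, each $M_{jh_{ik}}$ must separately annihilate the columns of $A_{ik}$; for $j \neq i$ the matrix $M_{jh_{ik}}$ is symmetric of maximal rank with image $U_j$, so its kernel is exactly $U_j^{\perp}$.

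The correctness of Step 4 will then split into two parts. For $i \neq i'$, the product $A_{i'k'}^T T_\ell A_{ik} = 0$ vanishes because $T_\ell A_{ik} = M_{i\ell} A_{ik} \in U_i$ while the columns of $A_{i'k'}$ are orthogonal to $U_i$. For $i = i'$, non-vanishing of some pairing is a Zariski open condition on the input, so it suffices to exhibit a single instance in which some entry is nonzero. Using that the pairing $A_{ik'}^T T_\ell A_{ik}$ depends only on the single rank-one term $M(a_i)$ (by the orthogonality above), I would invoke $\GL_n$-invariance to reduce to $a_{ij} = e_j$ and apply \Cref{prop:grouping-test}, which produces explicit eigenvectors $y, z$ together with a direction $\partial_{2d+1}$ satisfying $z^T T_{\partial_{2d+1}} y \neq 0$. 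Step 5 recovers $U_i = \sum_k T_\ell V_{ik}$: the inclusion $\subseteq$ follows because $T_\ell$ acts on the columns of $A_{ik}$ as $M_{i\ell}$ (other summands vanishing by orthogonality), and equality follows by a dimension count, since the $(2d+1)$ eigenspaces $V_{ik}$ are in direct sum of total dimension $\binom{2d+1}{d} = \dim(U_i)$, and for generic $\ell$ no nonzero element of their sum lies in $\ker(M_{i\ell}) = U_i^{\perp}$.

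For Step 7, I would expand $T = \sum_{j=1}^r M(a_j)$ inside the formula for $a_{ik}'$. For $j \neq i$, the partial contraction $((h_{ik}^T)^{\otimes d+1} \otimes I_n^{\otimes d}) M(a_j)$ lies in $U_j$ and is killed by $\Pi_i$. For $j = i$, the identity $\langle h_{ik}, a_{ik}\rangle = 0$ forces the partial contraction to be a combination of the symmetric monomials $a_{i,T}$ with $k \in T$ and $|T|=d$; applying the final contraction $(h_{ik}^T)^{\otimes d-1}$ to any such $a_{i,T}$ requires all the non-$a_{ik}$ factors to be consumed by $h_{ik}$, leaving a scalar multiple of $a_{ik}$ alone. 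Genericity of $f, g$ ensures this multiple is nonzero, so $a_{ik}'$ is proportional to $a_{ik}$. Step 8 is then a one-dimensional linear solve recovering the scalar $c_i$, well-defined since the restricted action of $T_\ell$ on $V_{i1}$ agrees with that of $(a_{i1}' \cdots a_{i(2d+1)}')_\ell$ up to the scalar $c_i$.

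Finally, the runtime: Step 1 computes three contractions in $\mathcal{O}(n^{2d+1})$ each. Step 2 is an $\mathcal{O}(n^{3d})$ diagonalization since $T_f$ lives on a space of dimension $\binom{n+d-1}{d} = \mathcal{O}(n^d)$. Steps 4--6 can each be arranged to run in $\mathcal{O}(n^{3d})$ by precomputing $T_\ell A_{ik}$ once and inverting a single change-of-basis matrix to obtain all the $\Pi_i$ simultaneously. Step 7 is the bottleneck: for each of the $r(2d+1) = \mathcal{O}(n^d)$ eigenvalues the initial contraction costs $\mathcal{O}(n^{2d+1})$, while the subsequent projection and final contraction are of lower order, giving $\mathcal{O}(n^{3d+1})$ overall. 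The main obstacle in the argument is Step 4; everything else is essentially bookkeeping built on top of the combinatorial input from \Cref{prop:grouping-test} and the direct-sum/dimension counts provided by \Cref{lem:higher-order-direct-sum} and \Cref{prop:geom_mult}.
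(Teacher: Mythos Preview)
Your proposal is correct and follows essentially the same route as the paper: the orthogonality $V_{ik}\perp U_j$ for $j\neq i$, the reduction of Step~(4) via $\GL_n$ to \Cref{prop:grouping-test}, the dimension count for Step~(5), the contraction argument for Step~(7), and the runtime analysis with Step~(7) as the $\mathcal{O}(n^{3d+1})$ bottleneck all match. Two small points to tighten: in Step~(4) you need $A_{ik'}^T T_\ell A_{ik}\neq 0$ for \emph{every} pair $k\neq k'$, whereas \Cref{prop:grouping-test} only exhibits one pair, so you should invoke the $\mathfrak{S}_{2d+1}$-symmetry of $M_i$ (as the paper does) to propagate to all pairs; and in Step~(5) the injectivity of $T_\ell$ on $\bigoplus_k V_{ik}$ is not a condition on $\ell$ but rather follows because any element of the intersection with $U_i^\perp$ would lie in the common kernel $(\bigoplus_j U_j)^\perp$ of the pencil, which the eigenspaces exclude by construction.
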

\begin{proof}
	\textbf{Correctness:} \\
	Step (2): Covered by \Cref{thm:contraction-variety-higher-chow} and \Cref{prop:geom_mult}. \\
	Step (4): It is clear that $ A_{i' k'}^T T_{\ell} A_{i k} = 0 $ if $ i\ne i' $, since the vectors in the image of $ A_{ik} $ are orthogonal to any vector that is not in $ U_i $, and those in the image of $  A_{i' k'} $ are orthogonal to any vector not in $ U_{i'} $. 
	Therefore, it remains to show the converse direction, that is, $ A_{ik'}^T T_{\ell} A_{i k} \ne 0 $. By $ \mathfrak{S}_{2d+1} $ symmetry, it suffices to show this for the case $ k = d, k' = d+1 $. Also by $ \GL_n $ symmetry, we may assume that $ (a_{i1},\ldots,a_{i(2d+1)}) = (e_1,\ldots,e_{2d+1}) $.
	But then, the question corresponds exactly to the setting of \Cref{prop:grouping-test}: Indeed, the eigenspace $ V_{i(d+1)} $ corresponds to the kernel of the matrix $ T_{\partial} $ from \Cref{prop:grouping-test} and the eigenspace $ V_{id} $ corresponds to the kernel of the matrix $ T_{\mu} $ from \Cref{prop:grouping-test}. (Note that the $ T $ in \Cref{prop:grouping-test} equals our $ M_{i} $). 
	Now, \eqref{eq:contraction_nonzero} guarantees that there exist eigenvectors, i.e., elements $ z \in \im A_{id}$ and $ y\in \im A_{i(d+1)} $ such that $ z^{T} T_{\ell} y \ne 0  $ for the contraction $ \ell = e_1 + \ldots + e_{2d+1} $. In particular,  $ z^{T} T_{\ell} y$ will be nonzero for a generic contraction $ \ell $. This shows that    $ A_{id}^T T_{\ell} A_{i (d+1)} \ne 0 $.\\
	Step (5): We need to show that $ W_i := \im T_\ell A_{i1}  + \ldots + \im T_\ell A_{i(2d+1)} $ is indeed equal to $ U_i $. First, note that the spaces $ \im A_{i1},\ldots, \im A_{i(2d+1)} $ are eigenspaces of $ T_g^{-1}T_f $ to distinct eigenvalues, and must therefore be in direct sum. By \Cref{prop:geom_mult}, the sum of these images has dimension $ \binom{2d+1}{d} $. Since $ \rk(T_\ell) $ is maximal, it follows that $ W_i = T_\ell (\im A_{i1} + \ldots + \im A_{i(2d+1)}) $ also has dimension $ \binom{2d+1}{d} $. 
	Furthermore, $ \im T_\ell = \bigoplus_{i = 1}^r \im M_{if} $ and we have that $ \im A_{ik} \subseteq \ker M_{i'f} $ for each $ i'\ne i $. Thus, we have $ W_i\subseteq \im M_{if} \subseteq U_i $, and therefore $ U_i = W_i $ for dimension reasons.\\
	Step (7): Clearly, $ \Pi_i ((h_{ik}^{T})^{\otimes d+1} \otimes I_n^{\otimes d})T = ((h_{ik}^{T})^{\otimes d+1} \otimes I_n^{\otimes d})M_{i}  $, since $ \Pi_i $ annihilates every vector in $ U_{i'} $, if $ i'\ne i $, and $ \Pi_i $ acts as the identity on $ U_i $. Therefore, $  a_{ik}' = ((h_{ik}^{T})^{\otimes 2d} \otimes I_n) M_{i} $ is the $ 2d $-fold contraction of the Chow rank-1 tensor $ M_{i} = a_{i1}\cdots a_{i(2d+1)} $ by $ h_{ik} $. Recall that $ h_{ik} $ is orthogonal to $ a_{ik} $. Therefore, contracting with $ h_{ik} $ annihilates every term in the symmetrized tensor product $ a_{i1}\cdots a_{i(2d+1)} $ where $ a_{ik} $ is on the left side. Consequentially, if we contract $ 2d $ times by $ h_{ik} $, then the only terms that do not get annihilated are those, where $ a_{ik} $ is on the rightmost side. Each of those clearly yields a multiple of $ a_{ik} $. A short calculation shows that the multiple is nonzero, as it is proportional to $ \prod_{k' \ne k} \langle a_{ik'}, h \rangle $, and a general point $ h_{ik} $ in the contraction variety does not lie in the intersection of two irreducible components. \\
	Step (8): Note that $ Tx_{i1} = M_{i}x_{i1} $, since $ x_{i1} $ annihilates all $ M_{i'} $ with $ i'\ne i $. Since $ M_i $ and $ a_{i1}' \cdots a_{i(2d+1)}' $ are nonzero multiples of each other, it suffices to show that they do not vanish after contraction with $ x_{i1} $. But if $ 0 =  M_{i}x_{i1} =  T_{x_{i1}} $, then $ 0 = f^{T} T_{x_{i1}} = T_f x_{i1} $. This contradicts the fact that $ T_f $ is general and the eigenvectors $ x_{i1} $ lie in the image of a general $ T_f $, hence not in the kernel. 
	
	\textbf{Runtime: } We suppress constants of $ d $ in the $ \mathcal{O} $-notation. The claim is that the total runtime is $ \mathcal{O}(n^{3d+1}) $, with the dominating part being 	Step (7). All other steps take time $ \mathcal{O}(n^{3d}) $ or less.  \\
	Step (1): Computing the contractions and reshaping them into matrices $ T_f, T_g, T_{\ell} $ takes time $ \mathcal{O}(n^{2d+1}) $. \\
	Step (2): A generalized eigenvalue problem on matrices of size $ n^d $ can be solved in time $ \mathcal{O}(n^{3d}) $. \\
	Step (4): Computing the product $  T_{\ell} A_{i k} $ of an $ \mathcal{O}(n^d) \times \mathcal{O}(n^d) $ matrix with a matrix of size $ \mathcal{O}(n^d)\times \mathcal{O}(1) $ takes time $ \mathcal{O}(n^{2d}) $. Doing this for all eigenvectors takes time $ r\mathcal{O}(n^{2d}) $. Subsequently, computing the products of $ T_{\ell} A_{i k}  $ with all other $ A_{i' k'} $ takes time $ \mathcal{O}(n^{d}) $ for each $ i' $ and $ k' $, which results in a total time of $ \mathcal{O}(r^2n^d + rn^{2d}) $ for Step (4). Using $ r=\mathcal{O}( n^d )$, this takes time $ \mathcal{O}(n^{3d}) $. \\
	Step (5): The matrices $ T_{\ell} A_{ik} $ have already been computed in Step (4), so there is not much to do here. Joining their columns together yields a basis of $ \im T_{\ell} A_{i1}  + \ldots + \im T_{\ell} A_{i(2d+1)} $. This can be done in time $ \mathcal{O}(r) \le \mathcal{O}(n^d) $. \\
	Step (6): We already computed a basis $\mathcal{B} = \bigcup_{i = 1}^r \mathcal{B}_i $ of $ U_1\oplus \ldots \oplus U_r $ in Step (5), where each $ \mathcal{B}_i $ is a basis of $ U_i $. The matrices $ \Pi_i $ can be interpolated from a dual basis $ \mathcal{B}^{\ast} $ to $ \mathcal{B} $. The dual basis can be computed by matrix inversion, which on a vector space of dimension $ \mathcal{O}(n^d) $, which takes time $ \mathcal{O}(n^{3d}) $.  \\
	Step (7): Contracting $ T $ with a vector $ h_{ik} $ takes time $ \mathcal{O}(n^{2d+1}) $. All subsequent contractions are dominated in complexity by the first contraction. Thus we obtain the vector $ v := ((h_{ik}^{T})^{\otimes d+1} \otimes I_n^{\otimes d})T $ in time $ \mathcal{O}(n^{2d+1}) $. Multiplying the size $ \mathcal{O}(n^d)\times \mathcal{O}(n^d) $ matrix $ \Pi_i $ with the vector $ v $ takes time $ \mathcal{O}(n^{2d}) $ and is thus also bounded by the complexity of the first contraction. All subsequent contractions of $ \Pi_i v $ are of negligible complexity. We need to repeat the procedure for each $ i $ and $ k $, resulting in a complexity of $ \mathcal{O}(r n^{2d+1}) $. Using $ r=\mathcal{O}( n^d) $, we get a total complexity of $ \mathcal{O}(n^{3d+1}) $ for Step (7).	\\
	Step (8): The vectors $ T_{\ell} \; x_{i1} $ are linear combinations of the columns of $ T_{\ell}A_{i1} $, which were computed in Step (4). Therefore, there is not much additional effort to compute them. Contracting a Chow rank $ 1 $ matrix by $ \ell $ takes time $ \mathcal{O}(n) $, if the rank decomposition is known. The matrix vector multiplication of $ (a_{i1}' \cdots a_{i(2d+1)}')_{\ell} $ with $ x_{i1} $ takes time $ \mathcal{O}(n^{2d}) $. Thus doing Step (8) for each $ i = 1,\ldots,r $, takes at most time $ \mathcal{O}(rn^{2d}) \le \mathcal{O}(n^{3d})$. 
\end{proof}

	\section{Third order Chow decompositions into non-generic orbits}\label{sec:cubic-nonindependent}

Let $ T \in S^3(\K^n)$ be again a symmetric order-$ 3 $ tensor with a Chow decomposition
\begin{align*}
	T = \sum_{i = 1}^{r} a_{i1} a_{i2} a_{i3}
\end{align*}
Write $U_i=\langle a_{i1},a_{i2},a_{i3}\rangle$. In this section, we consider the case where $T$ is concise and the $U_i$ are in direct sum, i.e. $\K^n=\bigoplus_{i=1}^r U_i$. 
However, contrary to \Cref{sec:core} we {no longer} assume that the entire  $\{a_{ik}\}_{i\in [r],k=1,2,3}$ is linearly independent. Thus, linear dependencies are allowed among $ a_{i1}, a_{i2}, a_{i3} $. We call a tensor \emph{direct-sum concise}, if it is concise and it has a minimum Chow rank decomposition where the $ U_i $ are in direct sum.

If linear dependencies are allowed and $ n\ge 3 $, then each Chow-rank-1 term $ a_{i1} a_{i2} a_{i3} $ lies in one of $ 4 $ distinct $ \GL_n $-orbits, which are described as follows. 
\begin{enumerate}[(I)]
	\item The orbit $ \GL_n e_1^{\otimes 3}$ contains those Chow-rank-1 terms  $ abc $, where $ \dim \langle a,b,c \rangle = 1$. It equals the Veronese variety $ \nu_3(\mathbb K^n) $. 
	\item The generic orbit $ \GL_n e_1e_2e_3$ contains Chow-rank-1 terms $ abc $, where all $ a,b,c\in \mathbb K^n $ are linearly independent. This is the unique orbit where $\dim \langle a,b,c \rangle = 3$.  
	\item The orbit of the \emph{W-tensor} $ \GL_n e_1^2e_2$ contains those Chow-rank-1 terms  $ abc $, where either $ a $ and $ b $, or $ b $ and $ c $ are collinear (but not both). Here, $ \dim \langle a,b,c \rangle = 2 $. The orbit closure is the tangential variety to the Veronese variety $ \nu_3(\mathbb K^n) $. 
	\item The orbit $ \GL_n e_1e_2(e_1+e_2)$ contains those Chow-rank-1 terms  $ abc $, where $ \dim \langle a,b,c \rangle = 2$ but each pair of two vectors from $ a,b,c $ is linearly independent. Over $ \K = \C $, it equals the orbit of $ e_1^3 + e_2^3 $, since the bivariate polynomial $ x^3 + y^3 = (x+y)(x+\zeta y)(x+\zeta^2y) $ factors into a product of three linearly dependent linear forms, where $ \zeta \in \C $ denotes a primitive third root of unity. 
\end{enumerate}

Note that the complex factorization in case (IV) implies that the real and the complex case need different treatments. Indeed, over $ \K=\C $, the closure of orbit (IV) contains the rank-2 secant variety $ \sigma_2(\nu_3(\mathbb C^n)) $ of the Veronese variety. This means that already a sum of two general elements of orbit (IV) will not have a unique decomposition.\footnote{This is a special case of the fact that proper secant varieties can never be identifiable. Precisely, if $ X $ is any variety and $ Y = \sigma_2(X) $, then $ Y $ is not $ 2 $-identifiable (even when $ X $ is $ 4 $-identifiable). Cf. also the example in \cite[Remark 3.10]{Massarenti_Mella_2024}.} One may look at the example $T=e_1^3+e_2^3+e_3^3+e_4^3$, which has Chow rank 2.\footnote{The Chow rank of $T$ is at most $2$ by construction, and at least 2 because the flattening rank is $4$, whereas a Chow rank 1 tensor has flattening rank at most 3.} 
The tensor $ T $ is non-identifiable, as it can be written both as $ T = (e_1^3+e_2^3)+(e_3^3+e_4^3) $ and as $ T = (e_1^3+e_3^3)+(e_2^3+e_4^3) $, which give rise to two distinct Chow decompositions of minimum rank. Precisely, the decompositions are  
\begin{align*}
	T&=(e_1+e_2)(e_1 + \zeta e_2)(e_1+{\zeta}^2e_2)+(e_3+e_4)(e_3 + \zeta e_4)(e_3+\zeta^2 e_4),\\
	T&=(e_1+e_3)(e_1 + \zeta e_3)(e_1+\zeta^2 e_3)+(e_2+e_4)(e_2 + \zeta e_4)(e_2+\zeta^2 e_4).
\end{align*}
In addition, note that over $ \K=\C $, each point in the type (IV) orbit can be written as a sum of two points in the type I orbit. However, this is not a problem when considering the minimum rank decomposition. Note that if there is an odd number of type (I) orbits, then an arbitrary type (I) orbit need not be grouped into a pair. We now formulate the different objectives for $ \K=\C $ and $ \K = \R $.

\begin{question}[Real]\label{q:real}
	Given a symmetric, direct-sum concise tensor $ T \in \mathbb R^{n\times n\times n}$ of Chow rank $ n $, find the unique minimum Chow rank decomposition. 
\end{question}

\begin{question}[Complex]\label{q:complex}
	Given a symmetric, direct-sum concise tensor $ T \in \mathbb C^{n\times n\times n}$ of Chow rank $ n $, find a minimum Chow rank decomposition, and show uniqueness up to grouping pairs of type (I) orbits into type (IV) orbits.
\end{question}

Note that in the real case, it is possible to distinguish between a sum of two type (I) terms and a type (IV) term. A sum of two real type (I) terms will be of the form $ a^3 + b^3 $ with real $ a,b\in \R^n $, whereas a real type (IV) term can be written uniquely in the form $ z^3 + \bar{z}^3 $. The pairing into complex conjugates will ensure uniqueness of the minimum rank decomposition. Let us quickly prove this fact. 

\begin{lemma}
	A real Chow rank one tensor of type (IV) can be written uniquely as $x^3 + y^3 $ for vectors $x,y \in \C^n$. It furthermore holds that $x \notin \R^n$, and $y=\overline{x}$.
\end{lemma}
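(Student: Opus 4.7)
The plan is to first establish a Waring decomposition $T = x^3 + y^3$ over $\C$ by combining the description of orbit (IV) with classical identifiability for binary cubics, then to use the reality of $T$ and a short case analysis to force $y = \overline{x}$ with $x \notin \R^n$.

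For existence and complex uniqueness: since $T$ is of type (IV), the paragraph describing orbit (IV) identifies the $\GL_n(\C)$-orbit of $T$ with that of $e_1^3 + e_2^3$, so $T = x^3 + y^3$ for linearly independent $x, y \in \C^n$. The $2$-dimensional subspace $U := \langle x, y \rangle$ is intrinsic to $T$, e.g.\ as the smallest $V \subseteq \C^n$ with $T \in S^3(V)$, equivalently the span of the image of the flattening $T \colon \C^n \to S^2(\C^n)$. Restricted to $U$, the tensor $T$ is a binary cubic of symmetric rank $2$, and by Sylvester's classical identifiability theorem the unordered pair $\{x^3, y^3\}$ is uniquely determined by $T$, with each of $x, y$ itself unique up to swap and to rescaling by a cube root of unity.

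Next I will use that $T$ is real. Conjugation gives $T = \overline{T} = \overline{x}^3 + \overline{y}^3$, and the uniqueness above forces $\{x^3, y^3\} = \{\overline{x}^3, \overline{y}^3\}$. Either (b) $\overline{x}^3 = y^3$, in which case $y = \zeta \overline{x}$ for a cube root of unity $\zeta$ that can be absorbed into $y$ so that $y = \overline{x}$; or (a) $\overline{x}^3 = x^3$ and $\overline{y}^3 = y^3$ hold separately. In case (b), if additionally $x \in \R^n$, then $y = x$ and $T = 2 x^3$ would be of type (I), contradicting the hypothesis; thus $x \notin \R^n$, as claimed.

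The main obstacle is ruling out case (a). The plan there is to show that $x = \zeta \overline{x}$ with $\zeta^3 = 1$ forces $x$ to be a complex scalar multiple of a real vector: splitting the trivial case $\zeta = 1$ from the two primitive cube roots of unity, a short direct computation yields $x^3 = w^3$ for some $w \in \R^n$, and analogously $y^3 = (w')^3$ for some $w' \in \R^n$. Then $T = w^3 + (w')^3$ is a sum of two real cubes, whose cubic polynomial factors over $\R$ as $(w + w')(w^2 - w w' + (w')^2)$, i.e.\ one real linear factor times an $\R$-irreducible quadratic. On the other hand, a real type (IV) tensor is by definition a product $abc$ of three pairwise linearly independent real linear forms, hence factors over $\R$ into three real linear factors. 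Uniqueness of factorization of real polynomials into real irreducibles makes these two scenarios incompatible, giving the desired contradiction.
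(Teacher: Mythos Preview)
Your proof is correct and takes a genuinely different route from the paper. The paper normalizes via a real change of coordinates to $e_1e_2(e_1+e_2)$ and then exhibits the explicit identity
\[
e_1e_2(e_1+e_2)=\frac{1}{3(\zeta^2-\zeta)}\big[(e_1-\zeta e_2)^{\otimes 3}-(e_1-\zeta^2 e_2)^{\otimes 3}\big],
\]
reading off the form $z^3+\bar z^3$ directly; the claim $z\notin\R^n$ is the one-liner that otherwise $T$ would have Waring rank~$1$. You instead invoke Sylvester identifiability for binary cubics to pin down $\{x^3,y^3\}$, use reality of $T$ to force $\{x^3,y^3\}=\{\bar x^3,\bar y^3\}$, and eliminate the ``fixed'' case via a UFD argument contrasting the $\R$-irreducible quadratic $w^2-ww'+(w')^2$ with the three real linear factors $abc$. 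Your approach is more conceptual and, notably, actually establishes the uniqueness clause of the lemma, which the paper's own proof does not address. The paper's route is shorter and hands-on; yours is self-contained on uniqueness and coordinate-free. One small point worth making explicit: in case~(a) you should record that $w,w'$ inherit linear independence from $x,y$ (since $x=\omega w$, $y=\omega' w'$ for cube roots of unity), so the quadratic factor is genuinely $\R$-irreducible.
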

\begin{proof}
	Let $ a,b,c \in \mathbb R^n $ define a type IV term. That is, $ a,b,c $ are pairwise non-collinear and $ c\in \langle a,b \rangle $. Then, after a real change of coordinates $ \varphi \in \GL_n(\R) $, we may assume that $ a = e_1, b = e_2 $ and $ c = e_1+e_2 $. Consider the identity 
	\[
		e_1e_2(e_1+e_2) =\frac{1}{3(\zeta^2-\zeta)}\Big[(e_1-\zeta e_2)^{\otimes 3}-(e_1-\zeta^2 e_2)^{\otimes 3}\Big], 
	\]
	where $ \zeta \in \C $ is a primitive third root of unity. This shows that $ e_1e_2(e_1+e_2) = z^3 + \bar{z}^3 $ for $ z = e_1-\zeta e_2 $. Note that $ \zeta^2 = \bar{\zeta} $. 
	After a reversing the real linear change of coordinates, $ abc = (\varphi^{-1}(z))^{\otimes 3} + (\overline{\varphi^{-1}(z)})^{\otimes 3} $ is still of the desired form. By construction, $\varphi^{-1}(z)\notin \R^n$, because $abc$ does not have Waring rank 1.
\end{proof}
Prony's method and our main algorithm can be used to decompose symmetric tensors into class (I) and (II) orbits, respectively. We first present an algorithm for decomposing a tensor into class (III) orbits (W-tensors), and then describe how the three different algorithms for classes (I-III) can be combined into a single Chow decomposition algorithm.

\subsection{Decomposition into W-tensors}\label{sec:W}

Let $T=\sum_{i=1}^r a_{i1}^2 a_{i2}$. In this subsection we give an $\mathcal{O}(n^4)$-time algorithm for recovering these Chow rank 1 terms when the spaces $U_i:=\langle a_{i1}, a_{i2} \rangle$ are in direct sum. We first determine the contraction variety.

\begin{prop}
It holds that $X_T=\bigcup_{i=1}^r \langle a_{i1} \rangle^{\perp}$.
\end{prop}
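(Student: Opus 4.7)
The plan is to follow the template of Proposition~\ref{prop:contraction-variety-chow}, using the direct-sum hypothesis to reduce the rank-drop condition on $T_f$ to rank-drop conditions on the individual summands $M_{if}$, where $M_{if}$ denotes the contraction of $a_{i1}^2 a_{i2}$ along $f$. Writing $T_f = \sum_{i=1}^r M_{if}$ and noting that $\im(M_{if}) \subseteq U_i$, the assumption that the $U_i$ are in direct sum gives
\[
	\rk(T_f) = \sum_{i=1}^r \rk(M_{if}).
\]
Since a generic contraction on a $2$-dimensional $U_i$ gives rank $2$, the tensor $T_f$ has subgeneric rank if and only if at least one $M_{if}$ has rank strictly less than $2$.

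The remaining task is the pointwise question: for a single W-tensor $a^2 b$ with $\dim\langle a,b\rangle = 2$, for which $f$ does $M_f = (a^2 b)_f$ have rank less than $2$? Here I would exploit $\GL_n$-equivariance to pass to the normal form $a = e_1,\ b = e_2$, in which case a short calculation (using the symmetrization formula $a^2 b = \tfrac{1}{3}(a\otimes a\otimes b + a\otimes b\otimes a + b\otimes a\otimes a)$) gives, in the basis $(e_1,e_2)$ of $U_i$,
\[
	3 \, M_f\bigl|_{U_i} \;=\; \begin{pmatrix} \langle f,b\rangle & \langle f,a\rangle \\ \langle f,a\rangle & 0 \end{pmatrix},
\]
with determinant $-\langle f,a\rangle^2$. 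Thus $M_f$ is rank-deficient precisely when $\langle f,a\rangle = 0$, i.e.\ when $f \in \langle a\rangle^{\perp}$. Note, importantly, that orthogonality to $b$ alone does not drop the rank, unlike in the fully linearly independent case of Proposition~\ref{prop:contraction-variety-chow}. Applying this back to each $M_{if}$ and taking the union gives
\[
	X_T = \bigcup_{i=1}^r \langle a_{i1}\rangle^{\perp}.
\]

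I do not anticipate a real obstacle here: the direct-sum hypothesis does the heavy lifting of decoupling the summands, and the remaining rank computation is a $2\times 2$ determinant. The only thing to be careful about is to read off the correct vector from the normal form: the contraction variety picks out the repeated factor $a_{i1}$ and not $a_{i2}$, which is ultimately what forces the subsequent decomposition algorithm (and presumably its $\mathcal{O}(n^4)$ runtime) to recover the $a_{i1}$'s from the pencil and then do additional work on each $U_i$ to extract the $a_{i2}$'s.
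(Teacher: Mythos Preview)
Your proposal is correct and follows essentially the same approach as the paper's proof: reduce via the direct-sum hypothesis to the rank of each $M_{if}$, pass to the normal form $a_{i1}=e_1$, $a_{i2}=e_2$, and compute that the contraction drops rank precisely when $\langle f,a_{i1}\rangle=0$. The paper records the contraction as $3M_{if}=2\alpha_1 e_1e_2+\alpha_2 e_1^2$ and reads off the rank condition directly, while you write out the equivalent $2\times 2$ matrix and take its determinant; these are the same computation.
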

\begin{proof}
	Let $M_i=a_{i1}^2 a_{i2}$. For $ f\in \K^n $, we may write $ T_f = \sum_{i = 1}^{r} M_{if} $, where $ M_{if} $ is the contraction of $ a_{i1}^2 a_{i2} $ along $ f^T$. Note that for generic $f$, $\im M_{if} = U_i$ for all $i$. Note also that $\text{rank}(T_f)=2r$ for generic $f\in \K^n$, and the rank of $T_f$ drops (i.e., $f$ lies in the contraction variety) if and only if $\text{rank}(M_{if}) < 2$ for some $i$. For such $ i $, we can assume without loss of generality that $a_{i1}=e_1$ and $a_{i2}=e_2$. Let $\alpha_1, \alpha_2$ be the first two coordinates of $f$. Then
	\begin{align}\label{eq:w-state-geometric-multiplicity}
		3 M_{i f} = 2 \alpha_1 e_1 e_2 + \alpha_2 e_1 e_1,
	\end{align}
	which has rank $\leq 1$ if and only if $\alpha_1=0$. This completes the proof.
\end{proof}

Hence, the contraction variety is a union of $r$ distinct hyperplanes. 
Furthermore, we see that for generic $f$ and $g$ all of the generalized eigenvalues will have geometric multiplicity 1 (and algebraic multiplicity 2), as these correspond to intersection points of the generic line $f-\lambda g$ with the contraction variety.

Our decomposition algorithm now proceeds as follows. First, we choose generic $f,g \in \K^n$, and solve the generalized eigenvalue problem
\begin{align*}
	\det(T_f-\lambda T_g)=0.
\end{align*}
The solutions $\lambda$ are precisely those for which the rank of $M_{i (f-\lambda g)}$ is 1 for some $i$. We denote these eigenvalues by $\lambda_{i}$. All are distinct with geometric multiplicity 1.

We let $a_{i1}'=T_{f}x_{i1}$, which is a nonzero scalar multiple of $a_{i1}$. Next, we let $b_i$ be a vector that is orthogonal to $a_{j1}$ for all $j\neq i$, and satisfies $\langle b_i, a_{i1}\rangle =1$. Then $3((b_i^T)^{\otimes 2} \otimes I)T= a_{i2} + 2 \langle b_i, a_{i2} \rangle a_{i1}$. This gives a second basis element $v_i$ for $U_i$.

Now, to recover the terms $a_{i1}^2 a_{i2}$, we let $\Pi_i$ be the matrix for which $\Pi_i |_{U_i}= I_{U_i}$ and $\Pi_i |_{U_j} =0$ for $j \neq i$. Then
\begin{align*}
\Pi_i T_f g &= (a_{i1}^2 a_{i2})_f g = ((a_{i1}')^2(\alpha a_{i1}'+\beta v_i))_f g\\
&= \alpha \frac{1}{3} \langle f, a_{i1}' \rangle \langle g, a_{i1}' \rangle v_i \\
&\quad+ \beta \frac{1}{3}(\langle f, a_{i1}' \rangle \langle g, a_{i1}' \rangle v_i + (\langle f,a_{i1}' \rangle \langle g, v_i \rangle + \langle f, v_i \rangle \langle g, a_{i1}' \rangle) a_{i1}')
\end{align*}
This gives a linear system of equations for $\alpha$ and $\beta$, which uniquely determines $\alpha$ and $\beta$ because they appear as the coefficients of linearly independent vectors. It takes $\mathcal{O}(n^3)$ time to compute all the matrices $\Pi_i$ collectively, and $\mathcal{O}(n^2)$ time for each index to solve the linear system for $\alpha$ and $\beta$.

We have proven the following theorem:

\begin{theorem}\label{thm:general_chow}
	Suppose that $T=\sum_{i=1}^r a_{i1}^2 a_{i2}$ with $U_i := \langle a_{i1}, a_{i2}\rangle$ in direct sum. Then this is the unique minimal decomposition of $T$ into W-tensors, and the following algorithm produces the terms $a_{i1}^2 a_{i2}$:
	\begin{enumerate}
		\item Sample generic $ f, g, \ell \in \K^n $. 
		\item Find the solutions $ \lambda_{i}, x_{i} $ to $ \det(T_f - \lambda T_g) = 0 $ (generalized eigenvalues and eigenvectors).
		\item Let $a_{i1}'=T_f x_{i1}$, and let $b_i$ be a vector that is perpendicular to all $a_{j1}'$ for $j \neq i$, and satisfies $\langle b_i, a_{i1} \rangle =1$. Compute $ v_i = ((b_i^T)^{\otimes 2} \otimes I)T  $ and let $U_i=\langle a_{i1}', v_i\rangle $.	
		\item Let $\Pi_i$ be the matrix for which $\Pi_i |_{U_i}= I_{U_i}$ and $\Pi_i |_{U_j} =0$ for $j \neq i$.
		\item Solve the linear system $\Pi_i T_f g= ((a_{i1}')^2(\alpha a_{i1}'+\beta v_i))_f g$ for $\alpha$ and $\beta$. Let $a_{i2}'=\alpha a_{i1}'+\beta v_i$, so that $a_{i1}^2a_{i2}=(a_{i1}')^2 a_{i2}'$.
		\item Output $\{(a_{i1}')^2 a_{i2}' : i=1,\dots, r\}$.
	\end{enumerate}
\end{theorem}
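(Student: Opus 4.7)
My plan is to verify each step of the algorithm in order, and then deduce uniqueness from the algorithm's determinism. Step~(2) is immediate from the preceding proposition: $X_T$ is a union of $r$ distinct hyperplanes, each meeting the generic line $\{f-\lambda g\}$ in one point with geometric multiplicity $1$, so the generalized eigenproblem has $r$ distinct eigenvalues $\lambda_i$ with one-dimensional eigenspaces spanned by the $x_{i1}$.

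The first substantive claim I would address is that $a_{i1}' = T_f x_{i1}$ is a nonzero multiple of $a_{i1}$. I would argue from $T_{f-\lambda_i g} x_{i1} = 0$ together with the direct-sum decomposition of images that each $M_{j(f-\lambda_i g)} x_{i1}$ vanishes. For $j \neq i$ this forces $x_{i1} \in U_j^\perp$ (since $M_{j(f-\lambda_i g)}$ is symmetric with image $U_j$), while for $j = i$ the explicit rank-$1$ form in~\eqref{eq:w-state-geometric-multiplicity} gives $\ker M_{i(f-\lambda_i g)} = \langle a_{i1}\rangle^\perp$. Consequently $T_f x_{i1} = M_{if} x_{i1}$, and in coordinates with $a_{i1} = e_1$, $a_{i2} = e_2$ a short calculation using~\eqref{eq:w-state-geometric-multiplicity} shows this equals $\tfrac{1}{3}\langle f, a_{i1}\rangle (x_{i1})_2\, a_{i1}$, which is nonzero for generic $f$ and generic $x_{i1}$ in its eigenspace.

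For Step~(3), a direct symmetrization computation gives
\begin{align*}
3((b_i^T)^{\otimes 2} \otimes I)(a_{i1}^2 a_{i2}) = a_{i2} + 2\langle b_i, a_{i2}\rangle a_{i1}.
\end{align*}
Since $\langle b_i, a_{j1}\rangle = 0$ for $j \neq i$, every term of the symmetrization of $a_{j1}^2 a_{j2}$ carries a factor $\langle b_i, a_{j1}\rangle$ under the contraction, so only the $j = i$ summand survives. This places $v_i$ in $U_i$ with nonzero $a_{i2}$-coefficient, so $\{a_{i1}', v_i\}$ is a basis of $U_i$. Steps~(4)--(5) are then routine: $\Pi_i$ is well-defined by direct-sum conciseness, and $\Pi_i T_f g = (a_{i1}^2 a_{i2})_f g$ because $(a_{j1}^2 a_{j2})_f g \in U_j$ for each $j$. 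Writing $a_{i2} = \alpha a_{i1}' + \beta v_i$ and substituting, one obtains the $2\times 2$ linear system displayed in the algorithm; its coefficients are, generically, linearly independent multiples of $a_{i1}'$ and $v_i$, so the system has a unique solution.

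Uniqueness of the minimal W-tensor decomposition then follows because any such decomposition produces the same contraction variety, and a union of $r$ distinct hyperplanes canonically determines the lines $\langle a_{i1}\rangle$. The remaining reconstruction of $a_{i2}$ from $a_{i1}$ and $T$ is deterministic, so two valid decompositions coincide up to relabeling. The main technical obstacle I expect is the identification $T_f x_{i1} \in \langle a_{i1}\rangle$ rather than merely in $U_i$: it depends crucially on the rank-$1$ form of $M_{i(f-\lambda_i g)}$ at the eigenvalue $\lambda_i$, which is what distinguishes the W-tensor orbit from the generic orbit treated in~\Cref{sec:core}.
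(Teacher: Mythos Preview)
Your proposal is correct and follows essentially the same route as the paper's argument in \Cref{sec:W}: the contraction variety proposition handles Step~(2), the direct-sum structure plus the rank-$1$ form of $M_{i(f-\lambda_i g)}$ gives $a_{i1}'\in\langle a_{i1}\rangle$, the double contraction by $b_i$ isolates a second basis vector of $U_i$, and the $\Pi_i$-projection reduces to a $2\times 2$ linear system. If anything, you are more explicit than the paper in two places: the paper simply asserts that $T_f x_{i1}$ is a nonzero multiple of $a_{i1}$ without the kernel computation you supply, and it does not spell out the uniqueness argument (that the flattening rank forces any minimal W-decomposition to have the direct-sum property, after which the hyperplanes of $X_T$ pin down the $\langle a_{i1}\rangle$). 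One cosmetic point: the eigenspace is one-dimensional, so ``generic $x_{i1}$ in its eigenspace'' is vacuous; the nonvanishing of the $a_{i2}$-component of $x_{i1}$ follows directly from $x_{i1}\neq 0$ and the orthogonality relations you already established.
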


\subsection{Runtime}
This algorithm has runtime $\mathcal{O}(n^4)$. Step (1) takes time $\mathcal{O}(n)$. Step (2) takes time $\mathcal{O}(n^3)$. Step (3) takes time $\mathcal{O}(n^3)$ for each index $i$, hence $\mathcal{O}(n^4)$ total. Steps (4-6) take time $\mathcal{O}(n^3)$.

\subsection{Third order Chow decompositions into mixtures of orbits}

Prony's method and our main algorithm decompose symmetric tensors into type (I) and (II) orbits, respectively, and our algorithm in the previous subsection can decompose symmetric tensors into type (III) orbits. In this subsection we show how to combine these algorithms to answer Questions~\ref{q:real} (real case) and~\ref{q:complex} (complex case).

Let $T=\sum_{i=1}^{r} a_{i1} a_{i2} a_{i3}$ be a direct-sum-concise tensor. We will present an $\mathcal{O}(n^4)$-time algorithm in both the real and complex settings.
We first determine the contraction variety by analyzing the four cases separately. 
\begin{enumerate}[(I)]
	\item In this case, $U_i = \langle a_{i1} \rangle$ and $\text{rank}(M_{if})$ drops if and only if $f \in \langle a_{i1} \rangle^{\perp}$.
	\item In this case, $a_{i1},a_{i2},a_{i3}$ are linearly independent, and $\text{rank}(M_{if})$ drops if and only if $f \in \langle a_{i1} \rangle^{\perp} \cup \langle a_{i2}\rangle^{\perp} \cup \langle a_{i3} \rangle^{\perp}$. This was proven in \Cref{sec:core}. 
	\item In this case, we have $\dim U_i=2$ and, without loss of generality, $M_i=a_{i1}^2 a_{i2}$, with $a_{i2} \notin \langle a_{i1}\rangle$. In this case, $\text{rank}(M_{if}) <2$ if and only if $f \in \langle a_{i1}\rangle^{\perp}$. This was proven in \Cref{sec:W}.
	\item In this case, there exist complex linearly independent vectors $v$ and $w$ for which $M_i=v^3+w^3$. The rank drops if and only if $f \in \langle v\rangle^{\perp} \cup \langle w \rangle^{\perp}$.
\end{enumerate}

So, by intersecting the contraction variety $ X_T $ with a general line, we obtain one eigenvector for each index in cases (I) and (III), three eigenvectors for each index in case (II), and two eigenvectors for each index in case (IV). 
Our algorithm proceeds then in two steps:
\begin{enumerate}
	\item Classify, which eigenvectors belong to types (II) and (III), and partition them according to the index $i$.
	\item For the remaining eigenvectors:
		\begin{enumerate}
			\item For $\K=\R$: Classify the remaining eigenvectors into types (I) and (IV) and partition them according to the index $i$.
			\item For $\K=\C$: Pair up the eigenvectors arbitrarily (giving rise to type (IV) terms in the decomposition), possibly with one leftover (giving rise to a type (I) term in the decomposition).
		\end{enumerate}
	\item Recover a minimum Chow rank decomposition of $T$.
\end{enumerate}

\subsubsection{Step 1: Partition the eigenvectors along $i$ for types (II) and (III)}

For the first step, we can partition the eigenvectors corresponding to indices in type (II) using a similar trick as in Section~\ref{sec:core}:
\begin{prop}\label{prop:trick-for-dim-Ui-3}
Let $x_{ij}, x_{i' j'}, x_{i'' j''}$ be distinct eigenvectors. Then the three contractions $x_{i j}^T T_{\ell} x_{i' j'}$, $x_{i j}^T T_{\ell} x_{i'' j''}$, $x_{i' j'}^T T_{\ell} x_{i'' j''}$ are all nonzero if and only if $\dim(U_i)=3$ and $i=i'=i''$.
\end{prop}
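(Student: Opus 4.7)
The plan is to exploit the direct-sum structure $\K^n = \bigoplus_i U_i$ to force an orthogonality relation $x_{ij} \perp U_{i'}$ whenever $i \ne i'$, and then to count how many distinct eigenvectors each orbit type can contribute.

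First I would verify the orthogonality. Writing $T_f - \lambda T_g = \sum_{k} (M_{kf} - \lambda M_{kg})$ and using that each summand has image in $U_k$, the direct-sum hypothesis forces $(M_{i'f} - \lambda_{ij} M_{i'g})\, x_{ij} = 0$ for every $i' \ne i$. For generic $f, g$ the scalar $\lambda_{ij}$ is not a special value for block $i'$, so $M_{i'f} - \lambda_{ij} M_{i'g}$ is a symmetric matrix with image equal to $U_{i'}$, hence with kernel exactly $U_{i'}^{\perp}$. Therefore $x_{ij} \in U_{i'}^{\perp}$. Combined with $T_{\ell}\, x_{i'j'} \in U_{i'}$ (same direct-sum argument applied to $T_{\ell}$), this immediately gives $x_{ij}^T T_{\ell} x_{i'j'} = 0$ whenever $i \ne i'$.

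For the ``only if'' direction, suppose all three contractions are nonzero. The orthogonality above forces $i = i' = i''$, so some single index $i$ must contribute at least three distinct eigenvectors. I would then read off from the orbit-by-orbit analysis already done in this section: case (I) contributes a single eigenvector, case (III) a single eigenvector (with algebraic multiplicity two but geometric multiplicity one, by~\eqref{eq:w-state-geometric-multiplicity}), case (IV) exactly two, and only case (II) gives three. Hence $\dim U_i = 3$. For the ``if'' direction, once $i = i' = i''$ with $\dim U_i = 3$, the identity $T_{\ell} x_{ij} = M_{i\ell}\, x_{ij}$ (from the orthogonality) reduces the question to the single type-(II) block $M_i = a_{i1}a_{i2}a_{i3}$, where the nonvanishing for generic $\ell$ is exactly \Cref{prop:source-trick-for-dim-Ui-3}.

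The main obstacle is purely bookkeeping around genericity: I must verify that the various open conditions on $(f, g, \ell)$---distinctness of eigenvalues across distinct blocks, maximality of the off-block pencils $M_{i'f} - \lambda_{ij} M_{i'g}$ on $U_{i'}$, and the nonvanishing inner products promised by \Cref{prop:source-trick-for-dim-Ui-3}---can all be imposed simultaneously. Each is an open condition, so their conjunction is generic and no genuinely new argument is needed.
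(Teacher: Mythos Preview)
Your proof is correct and follows the same approach as the paper, which simply records that the argument is ``essentially identical'' to that of \Cref{prop:source-trick-for-dim-Ui-3}. You have correctly spelled out the extra step that the mixed-orbit setting requires: once orthogonality forces $i=i'=i''$, one must appeal to the eigenvector count per orbit type (1, 3, 1, 2 for types I--IV) to conclude $\dim U_i=3$, and this is exactly what the paper's orbit-by-orbit analysis just above the proposition establishes.
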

\begin{proof}
This is proven essentially identically to \Cref{prop:source-trick-for-dim-Ui-3}.
\end{proof}

For the remaining eigenvectors, we can determine if they belong to case (III) as follows:

\begin{prop}\label{prop:special_case}
	Let $x_{ij}$ be a generalized eigenvector of $(T_f,T_g)$, and let $\ell \in \K^n$ be generic. Then $x_{ij}^T T_{\ell} x_{ij}=0$ if and only if $i$ belongs to case (III).
\end{prop}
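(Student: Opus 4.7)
The key reduction is that the direct sum $\K^n = \bigoplus_{i'} U_{i'}$, combined with the symmetry of each $M_{i'\ell}$ and the inclusion $\im M_{i'\ell} \subseteq U_{i'}$, gives $\ker M_{i'\ell} \supseteq U_{i'}^\perp$. The eigenvector equation $T_h x_{ij} = 0$ (with $h := f - \lambda_{ij} g$), together with the fact that the images $\im M_{i'h}$ lie in the direct sum $\bigoplus U_{i'}$, forces $M_{i'h} x_{ij} = 0$ for every $i'$; for generic $h$ this yields $x_{ij} \in U_{i'}^\perp$ for each $i' \neq i$, and the same reasoning applied to a generic $\ell$ gives $M_{i'\ell} x_{ij} = 0$ for $i' \neq i$. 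Therefore
\[
x_{ij}^T T_\ell x_{ij} \;=\; x_{ij}^T M_{i\ell} x_{ij} \;=\; M_i(\ell, x_{ij}, x_{ij}),
\]
where $M_i(\cdot,\cdot,\cdot)$ denotes the symmetric trilinear form associated to $M_i$. Setting $p_i(y) := M_i(y,y,y)$, the right-hand side equals $\tfrac{1}{2} \partial_t^2 p_i(\ell + t x_{ij})|_{t=0}$, and I plan to evaluate it case by case using the explicit form of $p_i$ and the orthogonality conditions on $x_{ij}$ coming from $x_{ij} \in \ker M_{ih}$.

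For case (III) (the vanishing direction), $M_i = a_{i1}^2 a_{i2}$ and $h \perp a_{i1}$ imply, by a short contraction calculation, $3 M_{ih} = \langle h, a_{i2}\rangle\, a_{i1} a_{i1}^T$, so $\ker M_{ih} = \langle a_{i1}\rangle^\perp$ and hence $\langle a_{i1}, x_{ij}\rangle = 0$. Substituting this into $p_i(\ell + t x_{ij}) = \langle a_{i1}, \ell\rangle^2 \langle a_{i2}, \ell + t x_{ij}\rangle$ yields a polynomial linear in $t$, so the second derivative vanishes. For the converse direction, I will compute the analogous expression in the other three cases: in case (I), $p_i(y) = \langle a_{i1}, y\rangle^3$ gives $M_i(\ell, x_{ij}, x_{ij}) = 3\langle a_{i1}, \ell\rangle \langle a_{i1}, x_{ij}\rangle^2$; in case (IV), writing $M_i = v^3 + w^3$ with $h \perp v$ gives $M_{ih}$ proportional to $ww^T$, forcing $\langle w, x_{ij}\rangle = 0$ and yielding $M_i(\ell, x_{ij}, x_{ij}) = 3\langle v, \ell\rangle \langle v, x_{ij}\rangle^2$; in case (II), the analysis of $M_{ih}$ from the proof of \Cref{prop:contraction-variety-chow} forces (after relabeling so that $h \perp a_{i1}$) $\langle a_{i1}, x_{ij}\rangle = 0$, and yields $M_i(\ell, x_{ij}, x_{ij}) = 2\langle a_{i1}, \ell\rangle \langle a_{i2}, x_{ij}\rangle \langle a_{i3}, x_{ij}\rangle$.

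The main obstacle is showing that the residual inner products $\langle a_{i1}, x_{ij}\rangle$ (case I), $\langle v, x_{ij}\rangle$ (case IV), and $\langle a_{i2}, x_{ij}\rangle, \langle a_{i3}, x_{ij}\rangle$ (case II) are all nonzero, so that the expressions above are nonzero for generic $\ell$. I will handle this with a uniform dimension-counting argument using $\bigoplus_{i'} U_{i'} = \K^n$: the eigenvector lies in $(V + \im M_{ih})^\perp$ where $V := \sum_{i' \neq i} U_{i'}$, and if $x_{ij}$ were additionally orthogonal to the factor in question, then $V + \im M_{ih}$ augmented by that factor would contain $V + U_i = \K^n$, forcing $x_{ij} = 0$ and contradicting that $x_{ij}$ is a nonzero eigenvector. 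For case (II), this is essentially the content of \Cref{prop:source-trick-for-dim-Ui-3}.
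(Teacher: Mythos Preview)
Your proposal is correct and follows essentially the same route as the paper. Both arguments reduce $x_{ij}^T T_\ell x_{ij}$ to the single-term quantity $M_i(\ell,x_{ij},x_{ij})$ via the orthogonality $x_{ij}\perp U_{i'}$ for $i'\neq i$, then analyze each orbit type separately; the paper simply phrases the case-(III) vanishing as $(M_i)_{x_{ij}}x_{ij}=0$ after a coordinate change rather than via your polarization identity, and for the nonvanishing in cases (I), (II), (IV) it appeals to \Cref{prop:trick-for-dim-Ui-3} (equivalently \Cref{prop:source-trick-for-dim-Ui-3}) and the explicit form $(M_i)_{x_{ij}}x_{ij}\in\langle a_{i1}\rangle$ (resp.\ $\langle v\rangle$), which amounts to exactly the dimension-count you describe.
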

\begin{proof}
	First suppose that $i$ belongs to case (III), i.e. $\dim(U_i)=2$ and $a_{i1},a_{i2},a_{i3}$ are not pairwise linearly independent. In this case, there is a single generalized eigenvalue $\lambda_{i1}$ with geometric multiplicity $1$. Without loss of generality we can assume $M_i=e_1^2 e_2$. Note that $3! M_{i e_2}= e_1 e_1^T$. Hence, $M_{if} e_2 = \langle f, e_1 \rangle  e_1$ and $M_{ig} e_2= \langle g, e_1 \rangle e_1$  are both non-zero scalar multiples of $e_1$. It follows that $e_2$ is a generalized eigenvector of $(M_{if},M_{ig})$. Since $T_{x_{ij}}x_{ij}= (M_i)_{e_2}e_2=0$, this direction is proven.
	
	Conversely, suppose that $x_{ij}^T T_{\ell}x_{ij}=0$ for generic $\ell$. Then $T_{x_{ij}} x_{ij}=0$. If $i$ belongs to type (I), then $T_{x_{ij}}x_{ij}$ is a nonzero scalar multiple of $a_{i1}$, hence nonzero. A similar argument applies for type (IV). The index $i$ can also not have type (II) by Proposition~\ref{prop:trick-for-dim-Ui-3}. Hence, $i$ has type (III).
\end{proof}

\subsubsection{Step 2: Partition the remaining eigenvectors}
At this point, only type (I) and (IV) eigenvectors remain to be sorted. Over the real numbers, the eigenvectors $x_{i1}, x_{i2}$ in a type (IV) orbit will satisfy the property that $T_{x_{i1}} x_{i1}$ is a scalar multiple of $\overline{T_{x_{i2}} x_{i2}}$. In this way, we pair up the type (IV) indices uniquely, and classify the remaining indices as type (I).

Over the complex numbers, due to the non-uniqueness described above, an arbitrary pairing of these eigenvectors will correspond to a sum of two type (I) orbits, i.e. a single type (IV) orbit. This arbitrary pairing leaves possibly one eigenvector leftover, which will correspond to a single type (I) orbit. By construction, the decomposition will be unique up to this choice.

\subsubsection{Step 3: Recover a minimum Chow rank decomposition of $T$.}

This step is almost as straightforward as running the algorithms for each type independently.

\begin{enumerate}[(I)]
\item In this case, we can set $a_{i1}'$ to be any non-zero element of the one-dimensional space $U_i$, solve the system $T_f x_{i1}=c \langle f, a_{i1}'\rangle \langle x_{i1}, a_{i1}'\rangle a_{i1}'$ for $\alpha \in \C$, and let $a_{i1}^3=\alpha (a_{i1}')^3$. This takes time $\mathcal{O}(n^2)$ for each index $i$.
\item In this case we can recover the term $a_{i1}a_{i2}a_{i3}$ identically as in Section~\ref{sec:core}. This takes time $\mathcal{O}(n^2)$ for each index $i$.
\item In this case, we need a small modification of the W-tensor decomposition algorithm to recover the spaces $U_i$. For each index $i$ of type (III), let $a_{i1}'=T_f x_{i}$. Let $b_i$ be a vector that is orthogonal to all spaces $U_i$ from cases (I,II, IV), orthogonal to all vectors $a_{j1}'$ for $j \neq i$, and satisfies $\langle b_i, a_{j1}' \rangle =1$. Then $U_i=\langle a_{i1}',((b_i^T)^{\otimes 2} \otimes I)T \rangle $. From here, we can recover the components $a_{i1}^2a_{i2}$ identically as in Section~\ref{sec:W}. In this case, it takes $\mathcal{O}(n^4)$ time to construct the spaces $U_i$. It takes $\mathcal{O}(n^3)$ time to compute all the matrices $\Pi_i$ collectively, and $\mathcal{O}(n^2)$ time for each index to solve the linear system for $\alpha$ and $\beta$ for each $i$, so $\mathcal{O}(n^3)$ time total. Hence, this step takes time $\mathcal{O}(n^4)$ time total.

\item In this case, for the two eigenvectors $x, y$ we let $v'=T_{x} x$, let $w'=T_y y$, and solve the systems $T_f x=\alpha \langle f, v' \rangle \langle x, v' \rangle v'$ and $T_f y=\beta \langle f, w' \rangle \langle x, w' \rangle w'$ for $\alpha,\beta \in \C$. We then let $v:=\alpha^{1/3} v'$ and $w:=\beta^{1/3} w'$, and determine vectors $a,b,c$ for which $v^3+w^3=abc$. 
The vectors $ a,b,c $ are given by $ a = v+w $, $ b = v+\zeta w $ and $ c = v+\zeta^2 w $, where $ \zeta $ is a primitive third root of unity.  
In total, the procedure takes time $\mathcal{O}(n^2)$ for each pair of eigenvectors.
\end{enumerate}

We have proven the following theorem. We omit the step-by-step version for brevity.

\begin{theorem}
Let $T=\sum_{i=1}^r a_{i1} a_{i2} a_{i3}$ be a direct-sum concise symmetric tensor. Then there is an $\mathcal{O}(n^4)$-time algorithm to recover a minimum Chow rank decomposition of $T$. This decomposition is unique if $\K=\R$, and if $\K=\C$ it is unique up to a choice of grouping pairs of type (I) terms into type (IV) terms.
\end{theorem}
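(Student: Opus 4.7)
The plan is to combine the three subalgorithms from the preceding subsections into a single procedure that (i) classifies each generalized eigenvector of the pencil $(T_f, T_g)$ by the $\GL_n$-orbit type of its associated Chow rank 1 term, (ii) groups the eigenvectors by the index $i$ in the decomposition, and (iii) applies a type-specific recovery step. Throughout, I sample generic $f,g,\ell \in \K^n$.

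The first substantive step is to solve the generalized eigenvalue problem $\det(T_f - \lambda T_g) = 0$. Combining the contraction variety analysis for each of cases (I)--(IV), every index $i$ contributes $\dim U_i$ distinct generalized eigenvalues (one for types (I) and (III), two for (IV), three for (II)), each with geometric multiplicity $1$, for a total of $n$ eigenpairs. This step costs $\mathcal{O}(n^3)$.

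Next I would classify and group the eigenvectors. By \Cref{prop:trick-for-dim-Ui-3}, the type (II) triples are exactly those $\{x_{ij}, x_{i'j'}, x_{i''j''}\}$ whose three pairwise contractions against $T_\ell$ are all nonzero; this identifies both the type and the grouping simultaneously. By \Cref{prop:special_case}, an eigenvector outside these triples is type (III) precisely when $x_{ij}^T T_\ell x_{ij} = 0$. The remaining eigenvectors lie in type (I) or (IV). Over $\K=\R$ I pair $x_{i1}$ with $x_{i2}$ whenever $T_{x_{i1}}x_{i1}$ is a scalar multiple of $\overline{T_{x_{i2}}x_{i2}}$, leaving the unpaired ones of type (I); the preceding lemma on the uniqueness of the conjugate-pair representation $z^3 + \bar z^3$ guarantees this assignment is well-defined. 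Over $\K=\C$ I pair the remaining eigenvectors arbitrarily, leaving at most one singleton; the induced non-uniqueness is exactly the grouping of type (I) pairs into type (IV) terms permitted by the statement. The classification takes $\mathcal{O}(n^3)$ in total, dominated by forming $T_\ell x_{ij}$ for all eigenvectors and then evaluating the $\mathcal{O}(n^2)$ bilinear pairings.

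Finally, I recover the Chow rank 1 terms type by type: a direct linear solve for (I); the algorithm of \Cref{sec:core} for (II); the W-tensor algorithm of \Cref{sec:W} for (III), with the auxiliary vectors $b_i$ chosen additionally to be orthogonal to every $U_j$ coming from the other types; and the identity $v^3 + w^3 = (v+w)(v+\zeta w)(v+\zeta^2 w)$ for (IV), after rescaling $v$ and $w$ using the contractions $T_x x$. The dominant cost is the type (III) step, because for each such index the modified vector $b_i$ must be orthogonal to every other $U_j$, which forces an $\mathcal{O}(n^3)$ linear-algebraic construction per index and thus $\mathcal{O}(n^4)$ overall, matching the claimed runtime. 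The main conceptual obstacle is showing that the classification criteria from \Cref{prop:trick-for-dim-Ui-3} and \Cref{prop:special_case} continue to give clean separations in the mixed-orbit setting, and that the W-tensor recovery still produces the correct $U_i$ when the ambient space also contains components of the other three types; both are handled exactly by the modified choice of $b_i$ and the direct-sum-concise hypothesis $\K^n = \bigoplus_i U_i$.
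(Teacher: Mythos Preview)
Your proposal is correct and follows essentially the same approach as the paper: classify eigenvectors via \Cref{prop:trick-for-dim-Ui-3} and \Cref{prop:special_case}, handle the residual (I)/(IV) eigenvectors by conjugate pairing over $\R$ or arbitrary pairing over $\C$, and then run the type-specific recovery routines, with the $\mathcal{O}(n^4)$ bottleneck at the modified W-tensor step. One small slip: the claim that each index contributes $\dim U_i$ distinct eigenvalues is wrong for type (III), where $\dim U_i = 2$ but there is a single eigenvalue of algebraic multiplicity $2$ and geometric multiplicity $1$ (your own parenthetical count is correct, so the total number of distinct eigenpairs is generally less than $n$); this does not affect the algorithm or the runtime.
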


	\subsection*{Acknowledgements}
	Alexander Taveira Blomenhofer is supported by the ERC grant of Matthias Christandl under Agreement 818761. Part of this work was done while Benjamin Lovitz was supported by the National Science Foundation under Award No. DMS-2202782.  We thank Matthias Christandl for inviting and funding a visit of Benjamin Lovitz at the QMATH centre of the University of Copenhagen. We also thank Eric Evert, Tim Seynnaeve, Kevin Stubbs, Daniele Taufer, Nick Vannieuwenhoven, and Aravindan Vijayaraghavan for valuable discussions. 

	\bibliography{bibML}
	\bibliographystyle{plain}

\end{document}